\documentclass{sig-alternate}

\usepackage[utf8]{inputenc}
\usepackage[T1]{fontenc}
\usepackage{graphicx}
\usepackage{amssymb}
\usepackage{amsmath}
\usepackage{mysymbols}
\usepackage{hyperref}
\usepackage{tikz}
\usepackage{dblfloatfix}
\usepackage{placeins}

\newtheorem{definition}{Definition}
\newtheorem{theorem}[definition]{Theorem}
\newtheorem{lemma}[definition]{Lemma}
\newtheorem{corollary}[definition]{Corollary}
\newtheorem{proposition}[definition]{Proposition}

\usepackage{algorithm}
\usepackage{algorithmic}

\usepackage{xcolor}

\newcommand{\Cyc}{\Phi}  
\newcommand{\bb}{\mathbf{B}}  
\newcommand{\uu}{\mathbf{U}}  

\newdef{remark}{Remark}

\begin{document}

\title{Fast algorithms for {\huge$\ell$}-adic towers over finite fields}
\numberofauthors{3}
\author{
  \alignauthor Luca De Feo\\
  \affaddr{Laboratoire PRiSM}\\
  \affaddr{Universit\'e de Versailles}\\
  \email{luca.de-feo@uvsq.fr}
  \alignauthor Javad Doliskani\\
  \affaddr{Computer Science Department}\\
  \affaddr{Western University}\\
  \email{jdoliska@uwo.ca}
  \alignauthor \'Eric Schost\\
  \affaddr{Computer Science Department}\\
  \affaddr{Western University}\\
  \email{eschost@uwo.ca}
}

\maketitle
\begin{abstract}
  Inspired by previous work of Shoup, Lenstra-De Smit and
  Couveignes-Lercier, we give fast algorithms to compute in (the first
  levels of) the $\ell$-adic closure of a finite field. In many cases,
  our algorithms have quasi-linear complexity.
\end{abstract}
\category{F.2.1}{Theory of computation}{Analysis of algorithms and problem complexity}[Computations in finite fields]
\category{G.4}{Mathematics of computing}{Mathematical software}
\terms{Algorithms,Theory}
\keywords{Finite fields, irreducible polynomials, extension towers, algebraic tori, Pell's equation, elliptic curves.}


\section{Introduction}
\label{sec:intro}

Building arbitrary finite extensions of finite fields is a fundamental
task in any computer algebra system. For this, an especially powerful
system is the ``compatibly embedded finite fields'' implemented in
Magma~\cite{MAGMA,bosma+cannon+steel97}, capable of building
extensions of any finite field and keeping track of the embeddings
between the fields.

The system described in~\cite{bosma+cannon+steel97} uses linear
algebra to describe the embeddings of finite fields. From a complexity
point of view, this is far from optimal: one may hope to compute and
apply the morphisms in quasi-linear time in the degree of the
extension, but this is usually out of reach of linear algebra
techniques. Even worse, the quadratic memory requirements make the
system unsuitable for embeddings of large degree extensions. Although
the Magma core has evolved since the publication of the paper,
experiments in Section~\ref{sec:impl} show that embeddings of large
extension fields are still out of reach.

In this paper, we discuss an approach based on polynomial arithmetic,
rather than linear algebra, with much better performance. We consider
here one aspect of the question, $\ell$-adic towers; we expect that
this will play an important role towards a complete solution.

\begin{table*}[!t]
$$
\begin{array}{c|cccc}
  \text{Condition} & \text{Initialization} & Q_i, T_i & \text{Lift, push}\\
  \hline \hline
  q = 1 \bmod \ell & O_e(\log(q))  & O(\ell^i) & O(\ell^i) \\
  q = -1 \bmod \ell & O_e(\log(q)) & O(\ell^i) & O(\Mult(\ell^i) \log(\ell^i)) \\
  - & O_e (\ell^2+\Mult(\ell) \log(q)) & O(\Mult(\ell^{i+1})\Mult(\ell)\log(\ell^i)^2) & O( \Mult(\ell^{i+1})\Mult(\ell)\log(\ell^i))\\
  4\ell \le q^{1/4} & O\tilde{_e}(\ell\log^5(q)+\ell^3) \text{~(bit)}  & O_e(\ell^2+\Mult(\ell)\log(\ell
  q)+\Mult(\ell^i)\log(\ell^i)) & O(\Mult(\ell^i) \log(\ell^i)) \\
  4\ell \le q^{1/4} & O\tilde{_e}(\ell\log^5(q))  \text{~(bit)}  + O_e(\Mult(\ell)\sqrt{q}\log(q)) & O_e(\log(q) + \Mult(\ell^i) \log(\ell^i)) & O(\Mult(\ell^i) \log(\ell^i)) \vspace{-0.5cm}
\end{array}
$$
\label{table:main}
\caption{Summary of results}
\end{table*}

Let $q$ be a power of a prime $p$, let $\F_q$ be the finite field with
$q$ elements and let $\ell$ be a prime. Our main interest in this
paper is on the algorithmic aspects of the \emph{$\ell$-adic closure}
of $\F_q$, which is defined as follows. Fix arbitrary embeddings
\begin{equation*}
  \F_q \subset \F_{q^\ell} \subset \F_{q^{\ell^2}} \subset \cdots;
\end{equation*}
then, the $\ell$-adic closure of $\F_q$ is the infinite field defined as
\begin{equation*}
  \F_q^{(\ell)} = \bigcup_{i\ge 0}\F_{q^{\ell^i}}.
\end{equation*}
We also call an \emph{$\ell$-adic tower} the sequence of extensions
$\F_q,\F_{q^\ell},\dots$ In particular, they allow us to build the
algebraic closure $\bar{\F}_q$ of $\F_q$, as there is an isomorphism
\begin{equation}
  \label{eq:tensor}
  \bar{\F}_q \isom \bigotimes_{\ell\text{ prime}} \F_q^{(\ell)},
\end{equation}
where the tensor products are over $\F_q$; we will briefly mention
below the algorithmic counterpart of this equality.

We present here algorithms that allow us to ``compute'' in the
first levels of $\ell$-adic towers (in a sense defined hereafter); at
level $i$, our goal is to be able to perform all basic operations in
quasi-linear time in the extension degree $\ell^i$.  We do not discuss
the representation of the base field $\F_q$, and we count 
operations $\{+,-,\times,\div\}$ in $\F_q$ at unit cost.


The techniques we use are inspired by those in~\cite{df+schost12},
which dealt with the Artin-Schreier case $\ell=p$ (see
also~\cite{DoSc12}, which reused these ideas in the case $\ell=2$): we
construct families of irreducible polynomials with special properties,
then give algorithms that exploit the special form of those
polynomials to apply the embeddings. Because they are treated in the
references~\cite{df+schost12,DoSc12}, {\em we exclude the cases $\ell=p$
and $\ell=2$}.

The field $\F_{q^{\ell^i}}$ will be represented as $\F_q[X_i]/\langle
Q_i\rangle$, for some irreducible polynomial $Q_i \in
\F_q[X_i]$. Letting $x_i$ be the residue class of $X_i$ modulo $Q_i$
endows $\F_{q^{\ell^i}}$ with the monomial basis
\begin{equation}
  \label{eq:uni-basis1}
  \uu_i = (1,x_{i},x_{i}^2,\ldots,x_{i}^{\ell^{i}-1}).
\end{equation}
Let $\Mult : \N \rightarrow
\N$ be such that polynomials in $\F_q[X]$ of degree less than $n$ can
be multiplied in $\Mult(n)$ operations in $\F_q$, under the
assumptions of~\cite[Ch.~8.3]{vzGG}; using FFT multiplication, one can
take $\Mult(n)\in O(n\log (n) \log\log (n))$. Then, multiplications and
inversions in $\F_q[X_i]/\langle Q_i \rangle$ can be done in
respectively $O(\Mult(\ell^i))$ and $O(\Mult(\ell^i)\log(\ell^i))$
operations in $\F_q$~\cite[Ch.~9-11]{vzGG}. This is almost optimal, as
both results are quasi-linear in $[\F_{q^{\ell^i}}:\F_q]=\ell^i$.

Computing embeddings requires more work. For this problem, it is
enough consider a pair of consecutive levels in the tower, as any
other embedding can be done by applying repeatedly this elementary
operation. Following again~\cite{df+schost12}, we introduce two
slightly more general operations, {\em lift} and {\em push}.

To motivate them, remark that for $i \ge 2$, $\F_{q^{\ell^{i}}}$ has
two natural bases as a vector space over $\F_q$. The first one is via
the monomial basis $\uu_{i}$ seen above, corresponding to the
univariate model $\F_q[X_{i}]/ \langle Q_{i} \rangle$. The second one
amounts to seeing $\F_{q^{\ell^{i}}}$ as a degree $\ell$ extension of
$\F_{q^{\ell^{i-1}}}$, that is, as
\begin{equation}\label{eq:QiTi}
\F_q[X_{i-1},X_i]/\langle Q_{i-1}(X_{i-1}), T_i(X_{i-1},X_i)\rangle,  
\end{equation}
for some polynomial $T_i$ monic of degree $\ell$ in $X_{i}$, and of
degree less than $\ell^{i-1}$ in $X_{i-1}$.  The corresponding basis is
bivariate and involves $x_{i-1}$ and $x_i$:
\begin{equation}
  \label{eq:bi-basis}
  \bb_{i} = (1,\ldots,x_{i-1}^{\ell^{i-1}-1},\ldots,x_i^{\ell-1},\ldots,x_{i-1}^{\ell^{i-1}-1}x_i^{\ell-1}).
\end{equation}
{\em Lifting} corresponds to the change of basis from $\bb_i$ to
$\uu_i$; {\em pushing} is the inverse transformation.

Lift and push allow us to perform embeddings as a particular case, but
they are also the key to many further operations. We do not give
details here, but we refer the reader
to~\cite{df+schost12,DoSc12,LeSc12} for examples such as the
computation of relative traces, norms or characteristic polynomials,
and applications to solving Artin-Schreier or quadratic equations,
given in~\cite{df+schost12} and~\cite{DoSc12} for respectively
$\ell=p$ and $\ell=2$.

Table~\ref{table:main} summarizes our main results.  Under various
assumptions, it gives costs (counted in terms of operations in $\F_q$)
for initializing the construction, building the polynomials $Q_i$ and
$T_i$ from~Eq.\eqref{eq:QiTi}, and performing lift and push. $O_e(\ )$
indicates probabilistic algorithms with expected running time, and
$O\tilde{_e}(\ )$ indicates the additional omission of logarithmic
factors.  Two entries mention bit complexity, as they use an elliptic
curve point counting algorithm.

In all cases, our results are close to being linear-time in $\ell^i$,
up to sometimes the loss of a factor polynomial in $\ell$.  Except for
the (very simple) case where $q=1 \bmod \ell$, these results are new,
to the best of our knowledge. To otbain them, we use two
constructions: the first one (Section~\ref{sec:LDtower}) uses
cyclotomy and descent algorithms; the second one
(Section~\ref{sec:fibers}) relies on the construction of a sequence of
fibers of isogenies between algebraic groups. 

These constructions are inspired by previous work due to respectively
Shoup~\cite{Shoup90,shoup94} and Lenstra / De
Smit~\cite{lenstra+desmit08-stdmodels}, and Couveignes /
Lercier~\cite{couveignes+lercier11}. We briefly discuss them here and
give more details in the further sections.

Lenstra and De Smit~\cite{lenstra+desmit08-stdmodels} address a
question similar to ours, the construction of the $\ell$-adic closure
of $\F_q$ (and of its algebraic closure), with the purpose of
standardizing it. The resulting algorithms run in
polynomial time, but (implicitly) rely on linear algebra and
multiplication tables, so quasi-linear time is not directly reachable.
References~\cite{Shoup90,shoup94,couveignes+lercier11} discuss a
related problem, the construction of irreducible polynomials over
$\F_q$; the question of computing embeddings is not considered.  Note
that the results in~\cite{couveignes+lercier11} are {\em
  quasi-linear}; they rely however on an algorithm by Kedlaya and
Umans~\cite{KeUm11} that works only in a boolean model, and as a
result share this specificity.

To conclude the introduction, let us mention a few applications of our
results. A variety of computations in number theory and algebraic
geometry require constructing new extension fields and moving elements
from one to the other. As it turns out, in many cases, the $\ell$-adic
constructions considered here are sufficient: two examples
are~\cite{df10,GaSc12}, both in relation to torsion subgroups of
Jacobians of curves.

The main question remains of course the cost of computing in arbitrary
extensions. As showed by Eq.~\eqref{eq:tensor}, this boils down to the
study of $\ell$-adic towers, as done in this paper, together with
algorithms for computing in \emph{composita}.
References~\cite{Shoup90,shoup94,couveignes+lercier11} deal with
related questions for the problem of computing irreducible polynomials;
a natural follow-up to the present work is to study the cost of
embeddings and similar changes of bases in this more general context.


\section{Quasi-cyclotomic towers}
\label{sec:LDtower}

In this section, we discuss a construction of the $\ell$-adic tower
over $\F_q$ inspired by previous work of Shoup~\cite{Shoup90,shoup94},
Lenstra-De Smit~\cite{lenstra+desmit08-stdmodels} and
Couveignes-Lercier~\cite{couveignes+lercier11}. The results of this
section establish rows 1 and 3 of Table~\ref{table:main}.

The construction starts by building an extension $\K_0 =
\F_q[Y_0]/\langle P_0 \rangle$, such that the residue class $y_0$ of
$Y_0$ is a non $\ell$-adic residue in $\K_0$ (we discuss this in more
detail in the first subsection); we let $r$ be the degree of
$P_0$.

By~\cite[Th.~VI.9.1]{lang}, for $i\ge 1$, the polynomial
$Y_i^{\ell^i}-y_0 \in \K_0[Y_i]$ is irreducible, so that $\K_i=
\K_0[Y_i]/\langle Y_i^{\ell^i}-y_0\rangle$ is a field with $q^{r
  \ell^i}$ elements.  If we let $y_i$ be the residue class of $Y_i$ in
$\K_i$, these fields are naturally embedded in one another by the
isomorphism $\K_{i+1} \simeq \K_i[Y_{i+1}]/\langle
Y_{i+1}^\ell-y_i\rangle$; in particular, the relation
$y_{i+1}^\ell=y_i$ holds.

In order to build $\F_{q^{\ell^i}}$, we apply a descent process, for
which we follow an idea of Shoup's. For $i \ge 0$, let $x_i$ be the
trace of $y_i$ over a subfield of index $r$:
\begin{equation}\label{eq-def:xi}
x_i = \sum_{j = 0}^{r-1} y_i^{q^{\ell^i j}}.  
\end{equation}
Then,~\cite[Th.~2.1]{Shoup90} proves that $\F_q(x_i)=\F_{q^{\ell^i}}$
(see Figure~\ref{fig:ladic}). In particular,
the minimal polynomials of $x_1,x_2,\dots$ over $\F_q$ are
the irreducible polynomials $Q_i$ we are interested in.

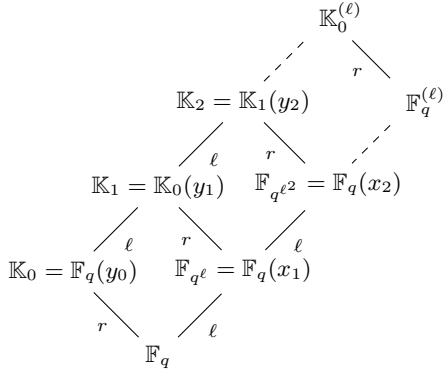
\begin{figure}[h]
  \centering
  \begin{tikzpicture}[node distance=1.6cm]
    \node(Q){$\F_q$};
    \node(Q0)[above left of=Q]{$\K_0=\F_q(y_0)$};
    \node(K1)[above right of=Q]{$\F_{q^\ell}=\F_q(x_1)$};
    \node(Q1)[above right of=Q0]{$\K_1=\K_0(y_1)$};
    \node(K2)[above right of=K1]{$\F_{q^{\ell^2}}=\F_q(x_2)$};
    \node(Q2)[above right of=Q1]{$\K_2=\K_1(y_2)$};
    \node(Koo)[above right of=K2]{$\quad \F_q^{(\ell)}$};
    \node(Qoo)[above right of=Q2]{$\quad\K_0^{(\ell)}$};
    \draw (Q) edge node[auto]{\scriptsize$r$} (Q0)
              edge node[auto,swap]{\scriptsize$\ell$} (K1)
          (K1) edge node[auto]{\scriptsize$r$} (Q1)
               edge node[auto,swap]{\scriptsize$\ell$} (K2)
          (Q1) edge node[auto]{\scriptsize$\ell$} (Q0)
          (Q2) edge node[auto,swap]{\scriptsize$r$} (K2)
               edge node[auto]{\scriptsize$\ell$} (Q1)
               edge[dashed] (Qoo)
          (Koo) edge[dashed] (K2)
               edge node[auto]{\scriptsize$r$} (Qoo);
  \end{tikzpicture}
  \caption{The $\ell$-adic towers over $\F_q$ and $\K_0$.}
  \label{fig:ladic}
\end{figure}

We will show here how to compute these polynomials, the polynomials
$T_i$ introduced in Eq.~\eqref{eq:QiTi} and how to perform lift and
push.  To this effect, we will define more general minimal
polynomials: for $0 \le j \le i$, we will let $Q_{i,j} \in
\F_q(x_j)[X_i]$ be the minimal polynomial of $x_i$ over $\F_q(x_j)$,
so that $Q_{i,j}$ has degree $\ell^{i-j}$, with in particular
$Q_{i,0}=Q_i$ and $Q_{i,i-1}=T_i(x_{i-1},X_i)$.

In Subsections~\ref{ssec:T1} and~\ref{ssec:T2}, we discuss favorable
cases, where $\ell$ divides respectively $q-1$ and $q+1$. The first
case is folklore; it yields the fastest and simplest algorithms; our
results for the second case are close to, but distinct from, previous
work of Gurak~\cite{gurak06} -- we will revisit these cases in
Section~\ref{sec:fibers} and account for their naming convention. Our
results in the general case (Subsection~\ref{ssec:gal}) are slower,
but still quasi-linear in $\ell^i$, up to a factor polynomial in
$\ell$.

Shoup used this setup to compute $Q_i$ in time quad\-ratic in
$\ell^i$~\cite[Th.~11]{shoup94}. It is noted there that using {\em
  modular composition} techniques~\cite[Ch.~12]{vzGG}, this behavior
could be improved to get a subquadratic exponent in $\ell^i$, up to an
extra cost polynomial in $\ell$.  For $\ell=3$ (where we are in one
the first two cases), Couveignes and Lercier make a similar remark
in~\cite[\S~2.4]{couveignes+lercier11}; using a result by Kedlaya and
Umans~\cite{KeUm11} for modular composition, they derive for any
$\varepsilon > 0$ a cost of $3^{i(1+\varepsilon)}O(\log(q))$ {\em bit}
operations, up to polynomial terms in $\log\log(q)$.

In this section, and in the rest of this paper, if $L/K$ is a field
extension, we write $\Tr_{L/K}$, $\Norm_{L/K}$ and $\Gal_{L/K}$ for
the trace, norm and Galois group of the extension. Recall also that
the notation $O_e(\ )$ indicates an {\em expected} running time.


\subsection{Finding $P_0$}

To determine $P_0$, we compute the $\ell$-th cyclotomic polynomial
$\Cyc_\ell \in \Z[X_0]$ and factor it over $\F_q[X_0]$:
by~\cite[Th.~9]{shoup94}, this takes $O_e(\Mult(\ell)\log(\ell q))$
operations in $\F_q$.

Over $\F_q[X_0]$, $\Cyc_\ell$ splits into irreducible factors of the
same degree $r$, where $r$ is the order of $q$ in $\Z/\ell\Z$ (so $r$
divides $\ell-1$); let $F_0$ be one of these factors. By construction,
there exist non $\ell$-adic residues in $\F_q[X_0]/\langle F_0
\rangle$. Once such a non-residue $y_0$ is found, we simply let $P_0$
be its minimal polynomial over $\F_q$ (which still has degree $r$);
given $y_0$, computing $P_0$ takes $O(r^2)$ operations in $\F_q$.

Following~\cite{Shoup90,shoup94,couveignes+lercier11}, we pick $y_0$
at random: we expect to find a non-residue after $O(1)$ trials;
by~\cite[Lemma~15]{shoup94}, each takes
$O_e(\Mult(\ell)\log(r)+\Mult(r)\log(\ell)\log(r)+\Mult(r)\log(q))$
operations in $\F_q$. An alternative due to Lenstra and De Smit is to
take iterated $\ell$-th roots of $X_0 \bmod F_0$ until we find a
non-residue: this idea is helpful in making the construction
canonical, but more costly, so we do not consider it.


\subsection{$T_1$-type extensions}\label{ssec:T1}

We consider here the simplest case, where $\ell$ divides $q-1$; the
(classical) facts below give the first row of Table~\ref{table:main}.

In this case, $\Phi_\ell$ splits into linear factors over $\F_q$ (so
$r=1$). The polynomial $P_0$ is of the form $Y_0-y_0$, where $y_0$ is
a non $\ell$-adic residue in $\F_q$; since we can bypass the
factorization of $\Phi_\ell$, the cost of initialization is
$O_e(\log(q))$ operations in $\F_q$. Besides, no descent is required:
for $i \ge 0$, we have $\K_i=\F_{q^{\ell^i}}$ and $x_i=y_i$; the
families of polynomials we obtain are
\begin{equation}
  \label{eq:T1}
  Q_i=X_i^{\ell^i}-y_0 \quad\text{and}\quad T_i=X_{i}^\ell-X_{i-1}.
\end{equation}
Lifting amounts to taking $F = \sum_{0 \le j < \ell^{i+1}} f_j
x_{i+1}^j$ and rewriting it as a bivariate polynomial in
$x_i,x_{i+1}$, using the rule
$$x_{i+1}^j = x_i^{\,j {\rm~div~} \ell} x_{i+1}^{\,j \bmod \ell}.$$
Pushing does the converse operation, using the rule
$$x_i^e x_{i+1}^f = x_{i+1}^{e \ell + f}.$$ Both 
use only exponent arithmetic, and no operation in~$\F_q$. 


\subsection{$T_2$-type extensions}
\label{ssec:T2}

Next, we consider the case where $\ell$ divides $q+1$, so that
$\Phi_\ell$ splits into quadratic factors over $\F_q$ (that is,
$r=2$). We also require that $y_0$ has norm $1$ over $\F_q$ (see below
for a discussion); we can then deduce an expression for the
polynomials $Q_{i,j} \in \F_q(x_{j})[X_i]$.

\begin{proposition}
  \label{th:T2-resultant}
  For $1 \le j < i$, $Q_{i,j}$ satisfies
  \begin{equation}
    \label{eq:T2-relpols}
    Q_{i,j}(X_i) = Y^{\ell^{i-j}} + Y^{-\ell^{i-j}} - x_j \mod Y^2-X_iY+1.
  \end{equation}
\end{proposition}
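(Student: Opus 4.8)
The plan is to reduce everything to the single identity $x_k = y_k + y_k^{-1}$, valid for every $k \ge 0$, after which the statement becomes a formal computation with Dickson-type polynomials together with a degree count.

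First I would propagate the normalization $\Norm_{\K_0/\F_q}(y_0)=1$ up the tower. Since $[\K_i:\F_{q^{\ell^i}}]=r=2$ and the nontrivial element of $\Gal_{\K_i/\F_{q^{\ell^i}}}$ is $\sigma_i : z\mapsto z^{q^{\ell^i}}$, we have $N_i := \Norm_{\K_i/\F_{q^{\ell^i}}}(y_i) = y_i\,\sigma_i(y_i)$. Because $\ell$ is odd, $\ell^i-1$ is even, so $q^{\ell^i}\equiv q \pmod{q^2-1}$ and hence $\sigma_i$ acts on $\K_0=\F_{q^2}$ as the $q$-power Frobenius; using $y_i^{\ell^i}=y_0$ this gives $N_i^{\ell^i} = y_0\,\sigma_i(y_0) = y_0^{\,q+1} = \Norm_{\K_0/\F_q}(y_0) = 1$. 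Thus the order of $N_i\in\F_{q^{\ell^i}}^{*}$ divides $\gcd(\ell^i,q^{\ell^i}-1)$; since $\ell\mid q+1$ and $\ell$ is odd we have $\ell\nmid q^{\ell^i}-1$, so this gcd is $1$ and $N_i=1$. Consequently $\sigma_i(y_i)=y_i^{-1}$, and plugging this into~\eqref{eq-def:xi}, which for $r=2$ reads $x_i = y_i + \sigma_i(y_i)$, yields $x_i = y_i + y_i^{-1}$; in particular $y_i$ is a root of $Y^2 - x_i Y + 1$.

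Next, from the tower relations $y_{k+1}^{\ell}=y_k$ one gets $y_j = y_i^{\ell^{i-j}}$, hence $x_j = y_j + y_j^{-1} = y_i^{\ell^{i-j}} + y_i^{-\ell^{i-j}}$. Now set $m=\ell^{i-j}$ and let $D_m(X_i)\in\F_q[X_i]$ be the polynomial obtained by reducing $Y^m+Y^{-m}$ modulo $Y^2-X_iY+1$ (equivalently $D_0=2$, $D_1=X_i$, $D_m=X_iD_{m-1}-D_{m-2}$, so $D_m$ is monic of degree $m$; note $Y$ is a unit modulo $Y^2-X_iY+1$, with inverse $X_i-Y$). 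The right-hand side of~\eqref{eq:T2-relpols} is then exactly $R(X_i):=D_m(X_i)-x_j \in \F_q(x_j)[X_i]$, which is monic of degree $m=\ell^{i-j}$. Since $y_i^2 - x_iy_i+1=0$ in $\K_i$, there is a well-defined evaluation $X_i\mapsto x_i$, $Y\mapsto y_i$, under which $R(x_i) = \bigl(y_i^{m}+y_i^{-m}\bigr) - x_j = 0$ by the previous display.

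Finally, $\deg R = \ell^{i-j} = [\F_{q^{\ell^i}}:\F_{q^{\ell^j}}] = [\F_q(x_i):\F_q(x_j)] = \deg Q_{i,j}$, so a monic polynomial over $\F_q(x_j)$ of that degree vanishing at $x_i$ must equal the minimal polynomial $Q_{i,j}$, which proves~\eqref{eq:T2-relpols}. The only genuinely non-formal step is the first one, propagating norm $1$ from $\K_0$ to $\K_i$, and the care it requires is the observation that $\sigma_i$ restricts to the $q$-Frobenius on $\K_0$ (using $\ell\neq 2$) together with the coprimality $\gcd(\ell,q^{\ell^i}-1)=1$; once $x_i=y_i+y_i^{-1}$ is established, the rest is bookkeeping with Dickson polynomials and a dimension count.
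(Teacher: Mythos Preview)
Your proof is correct and shares the paper's key step: establishing that $\Norm_{\K_i/\F_q(x_i)}(y_i)=1$, whence $x_i=y_i+y_i^{-1}$ and $y_i$ satisfies $Y^2-x_iY+1=0$. The paper reaches this via the same norm argument, phrased as an induction on $i$ showing that the only $\ell^i$-th root of unity in $\F_q(x_i)$ is $1$; your direct $\gcd$ computation using $q\equiv -1\pmod\ell$ and $\ell$ odd is an equivalent packaging. From that point the two arguments diverge in execution only. You recognise the right-hand side of~\eqref{eq:T2-relpols} as the Dickson polynomial $D_m(X_i)-x_j$ with $m=\ell^{i-j}$, verify directly that it vanishes at $x_i$ (using $x_j=y_i^m+y_i^{-m}$), and conclude by a degree count. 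The paper instead composes minimal polynomials to obtain $Y_i^{2m}-x_jY_i^{m}+1$ as the minimal polynomial of $y_i$ over $\F_q(x_j)$, then eliminates $Y_i$ via the resultant with $Y_i^2-X_iY_i+1$, obtaining $Q_{i,j}^2$ and unwinding ``by direct calculation''. Your route is slightly more elementary (no resultant machinery) and makes transparent both why the right-hand side is a genuine polynomial in $X_i$ and why it is monic of the correct degree; the paper's route has the virtue of presenting $Q_{i,j}$ as an explicit elimination, which explains the shape of the formula.
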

\begin{proof}
  Since $\Norm_{\K_0/\F_q}(y_0)=1$, $\Norm_{\K_i/\F_q(x_i)}(y_i)$ is
  an $\ell^i$-th root of unity. But $\ell$ does not divide $q-1$, so
  $1$ is the only such root in $\F_q$, and by induction on $i$ it also
  is the only root in $\F_q(x_i)$; hence, the minimal polynomial of
  $y_i$ over $\F_q(x_i)$ is $Y_i^2 -x_i Y_i +1$. By composition, it
  follows that the minimal polynomial of $y_i$ over $\F_q(x_{j})$ is
  $Y_i^{2\ell^{i-j}}-x_{j} Y_i^{\ell^{i-j}}+1$. Taking a resultant to
  eliminate $Y_i$
  between these two polynomials gives the following relation between
  $x_{j}$ and $x_i$:
  \begin{equation*}
    Q_{i,j}(X_i)^2 = \mathrm{Res}_{Y_i}(Y_i^{2\ell^{i-j}}-x_{j}Y_i^{\ell^{i-j}}+1,\; Y_i^2-X_i Y_i+1).
  \end{equation*}
  By direct calculation, this is equivalent to
  Eq.~\eqref{eq:T2-relpols}.
\end{proof}

This proposition would allow us to compute $Q_{i,j}$ in time
$O(\Mult(\ell^{i-j}))$ by repeated squaring. In
Section~\ref{ssec:fibers-T2}, we use arithmetic geometry to give a
better algorithm, and to efficiently find a $y_0$ satisfying the
hypotheses; we leave the algorithms for lift and push to
Section~\ref{sec:lift-push}.


\subsection{The general case}\label{ssec:gal}

Finally, we discuss the general situation, where make no
assumption on the behavior of $\Phi_\ell$ in $\F_q[X]$. This completes
the third row of Table~\ref{table:main}, using the bound $r\in O(\ell)$.

Because $r=[\K_0 : \F_q]$ divides $\ell-1$, it is coprime with
$\ell$. Thus, $Q_i$ remains the minimal polynomial of $x_i$ over
$\K_0$, and more generally $Q_{i,j}$ remains the minimal polynomial of
$x_i$ over $\K_{j}$; this will allow us to replace $\F_q$ by $\K_0$ as
our base field. We will measure all costs by counting operations in
$\K_0$, and we will deduce the cost over $\F_q$ by adding a factor
$O(\Mult(r)\log(r))$ to account for the cost of arithmetic in $\K_0$.

For $i \ge 0$, since $\K_i=\K_0[Y_i]/\langle Y_i^{\ell^i}-y_0\rangle$,
we represent the elements of $\K_i$ on the basis $\{y_i^e \mid 0 \le e
< \ell^i\}$; for instance, $x_i$ is written on this basis as
\begin{equation}\label{eq-def:xiyi}
x_i = \sum_{j = 0}^{r-1} y_i^{q^{\ell^i j} \bmod \ell^i}
y_0^{q^{\ell^i j} {\rm~div~} \ell^i}. 
\end{equation}
Our strategy is to convert between two univariate bases of $\K_i$,
$\{y_i^e \mid 0 \le e < \ell^i\}$ and $\{x_i^e \mid 0 \le e <
\ell^i\}$. In other words, we show how to apply the isomorphism
$$\Psi_i: \K_i=\K_0[Y_i]/\langle Y_i^{\ell^i}-y_0\rangle \to
\K_0[X_i]/\langle Q_{i,0}\rangle$$ and its inverse; we will compute
the required polynomials $Q_{i,0}$ and $Q_{i,i-1}$ as a byproduct. In
a second time, we will use $\Psi_i$ to perform push and lift between
the monomial basis in $x_i$ and the bivariate basis in
$(x_{i-1},x_i)$.

We will factor $\Psi_i$ into elementary
isomorphisms
$$\Psi_{i,j}: \K_j[X_i]/\langle Q_{i,j}\rangle \to
\K_{j-1}[X_i]/\langle Q_{i,j-1}\rangle, \quad j=i,\dots,1.$$ To start
the process, with $j=i$, we let $Q_{i,i}=X_i-x_i \in \K_i[X_i]$, so
that $\K_i=\K_i[X_i]/\langle Q_{i,i} \rangle$.
Take now $j \le i$ and suppose that $Q_{i,j}$ is known. We are going to
factor $\Psi_{i,j}$ further as $\Phi''_{i,j} \circ \Phi'_{i,j} \circ
\Phi_{i,j}$, by introducing first the isomorphism
$$\varphi_j: \K_j \to \K_{j-1}[Y_j]/\langle Y_j^\ell-y_{j-1}\rangle.$$
The forward direction is a push from the monomial basis in $y_j$ to
the bivariate basis in $(y_{j-1},y_j)$ and the inverse is a lift; none
of them involves any arithmetic operation (see
Subsection~\ref{ssec:T1}).  Then, we deduce the isomorphism
$$\Phi_{i,j}: \K_j[X_i]/\langle Q_{i,j} \rangle \to
\K_{j-1}[Y_j,X_i]/\langle Y_j^\ell-y_{j-1}, Q^\star_{i,j}\rangle,$$
where $Q^\star_{i,j}$ is obtained by applying $\varphi_j$ to all
coefficients of $Q_{i,j}$. Since $\Phi_{i,j}$ consists in a
coefficient-wise application of $\varphi_j$, applying it or its
inverse costs no arithmetic operations.

Next, changing the order of $Y_j$ and $X_i$, we deduce that there
exists $S_{i,j}$ in $\K_{j-1}[X_j]$ and an isomorphism
\begin{multline*}
\Phi'_{i,j}: \K_{j-1}[Y_j,X_i]/\langle Y_j^\ell-y_{j-1}, Q^\star_{i,j}\rangle
\to\\ \K_{j-1}[X_i, Y_j]/\langle Q_{i,j-1}, Y_j-S_{i,j}\rangle,
\end{multline*}
where $\deg(Q^\star_{i,j},X_i)=\ell^{i-j}$ and
$\deg(Q_{i,j-1},X_i)=\ell^{i-j+1}$. 
\begin{lemma}
  From $Q^\star_{i,j}$, we can compute $Q_{i,j-1}$ and $S_{i,j}$ in
  $O(\Mult(\ell^{i+1})\log(\ell^i))$ operations in $\K_0$.  Once this
  is done, we can apply $\Phi'_{i,j}$ or its inverse in
  $O(\Mult(\ell^{i+1}))$ operations in~$\K_0$.
\end{lemma}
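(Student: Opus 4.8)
The plan is to understand the isomorphism $\Phi'_{i,j}$ concretely as a change of variable order in a bivariate quotient ring, and to reduce the computation of $Q_{i,j-1}$ and $S_{i,j}$ to a small number of multiplications, Euclidean divisions, and one modular inversion in rings of degree $O(\ell^{i+1})$ over $\K_0$. We start from $\K_{j-1}[Y_j,X_i]/\langle Y_j^\ell-y_{j-1},\,Q^\star_{i,j}\rangle$. Since $Q^\star_{i,j}$ is monic of degree $\ell^{i-j}$ in $X_i$ with coefficients in $\K_{j-1}[Y_j]/\langle Y_j^\ell - y_{j-1}\rangle$, this quotient is a free $\K_{j-1}$-algebra of dimension $\ell\cdot\ell^{i-j}=\ell^{i-j+1}$. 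Its image under $\Phi'_{i,j}$ is $\K_{j-1}[X_i]/\langle Q_{i,j-1}\rangle$, a field of degree $\ell^{i-j+1}$ over $\K_{j-1}$, inside which $Y_j=S_{i,j}(x_i)$ for the sought polynomial $S_{i,j}$ of degree $<\ell^{i-j+1}$ in $X_i$. So the task is: given the bivariate ideal, \emph{eliminate $Y_j$} to obtain the univariate minimal polynomial $Q_{i,j-1}$ of $x_i$ over $\K_{j-1}$, and recover $Y_j$ as a polynomial in $x_i$.

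**Computing $Q_{i,j-1}$.** I would compute $Q_{i,j-1}$ as the characteristic (equivalently minimal, since it is a field) polynomial of multiplication by $X_i$ on the $\ell^{i-j+1}$-dimensional $\K_{j-1}$-algebra $A=\K_{j-1}[Y_j,X_i]/\langle Y_j^\ell-y_{j-1},\,Q^\star_{i,j}\rangle$. The cheap route is a resultant: $Q_{i,j-1}(X_i)=\mathrm{Res}_{Y_j}\!\big(Y_j^\ell-y_{j-1},\,Q^\star_{i,j}(Y_j,X_i)\big)$ up to a unit, since $A$ is the fiber product and the resultant of the two generators (viewing $X_i$ as a parameter) is exactly the defining polynomial of the $X_i$-coordinate on the image. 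Because one of the two polynomials, $Y_j^\ell-y_{j-1}$, is sparse and has the fixed small degree $\ell$ in $Y_j$, this resultant in $Y_j$ of a degree-$\ell$ polynomial against $Q^\star_{i,j}$ (degree $\ell^{i-j}$ in $X_i$, degree $<\ell^{j-1}$ in $Y_j$ — but here $Y_j$ has degree exactly $\ell$ so $<\ell$ really, its coefficients living in $\K_{j-1}$) can be evaluated by the subresultant/Euclidean scheme in $O(\ell)$ pseudo-divisions, each of which is a multiplication of polynomials of degree $O(\ell^{i-j+1})$ in $X_i$ over $\K_{j-1}$; passing to $\K_0$-operations by the substitution rule $\mathrm{M}(\ell^{j-1})\to$ arithmetic in $\K_{j-1}$, every such step costs $O(\mathrm{M}(\ell^{i-j+1})\mathrm{M}(\ell^{j-1}))\subseteq O(\mathrm{M}(\ell^{i+1}))$ operations in $\K_0$ (using super-multiplicativity of $\mathrm{M}$), and the $O(\log(\ell^i))$ factor absorbs the logarithmic overhead of the (fast) resultant/half-gcd computation. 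Alternatively, and perhaps cleaner to state, one computes $Q_{i,j-1}$ by a fast modular-composition-free method: the power basis $\{x_i^k\}$ and $\{y_{j-1}^m x_i^k\}$ are already related, so one lifts $x_i$ through the tower $A\twoheadrightarrow A/\langle\cdots\rangle$; but the resultant presentation is the most economical to justify the stated bound.

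**Recovering $S_{i,j}$ and applying $\Phi'_{i,j}$.** Once $Q_{i,j-1}$ is known, $S_{i,j}$ is determined by the requirement $Y_j\equiv S_{i,j}(X_i)\pmod{Q_{i,j-1}(X_i)}$ inside the image ring; equivalently, $S_{i,j}=Y_j$ read off in the power basis $1,x_i,\dots,x_i^{\ell^{i-j+1}-1}$ of $\K_{j-1}[X_i]/\langle Q_{i,j-1}\rangle$. Since $x_i$ generates this field over $\K_{j-1}$, I would obtain $S_{i,j}$ as the appropriate component of $(1,x_i,\dots,x_i^{\ell-1})$ expressed in terms of a single generator — concretely, solve the linear system expressing $Y_j$ against the power basis, but do so \emph{fast}: one computes the first $\ell^{i-j+1}$ powers of $x_i$ in $A$ (each multiplication in $A$ costing $O(\mathrm{M}(\ell^{i+1}))$ in $\K_0$, and there are $O(\ell^{i-j+1})$ of them, which is still within $O(\mathrm{M}(\ell^{i+1})\log(\ell^i))$ after a baby-step/giant-step grouping), and then a single $O(\mathrm{M}(\ell^{i+1})\log(\ell^i))$ modular inversion to change basis. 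Finally, applying $\Phi'_{i,j}$ forward is substitution $Y_j\mapsto S_{i,j}(X_i)$ followed by reduction modulo $Q_{i,j-1}$: for an element of degree $<\ell^{i-j}$ in $X_i$ and $<\ell$ in $Y_j$, this is $O(\ell)$ multiplications of degree-$O(\ell^{i-j+1})$ polynomials and reductions, i.e. $O(\mathrm{M}(\ell^{i+1}))$ operations in $\K_0$; the inverse applies the reduction modulo $\langle Y_j^\ell-y_{j-1},Q^\star_{i,j}\rangle$ in the same cost by one Euclidean division in $X_i$ and one in $Y_j$.

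**Main obstacle.** The delicate point is bookkeeping the cost bound uniformly in $j$: the polynomials have degree $\ell^{i-j+1}$ in $X_i$ \emph{and} coefficients in $\K_{j-1}$, a field of degree $r\ell^{j-1}$ over $\K_0$, and the product $\ell^{i-j+1}\cdot\ell^{j-1}=\ell^{i}$ (times $r$, absorbed elsewhere) must be shown to collapse — via super-multiplicativity $\mathrm{M}(ab)\le\mathrm{M}(a)\mathrm{M}(b)$ — into the single clean factor $\mathrm{M}(\ell^{i+1})$ regardless of where in the tower we are. The other subtlety is ensuring the resultant is genuinely a unit multiple of $Q_{i,j-1}$ and not a proper power or a product with spurious factors; this follows because $Q^\star_{i,j}$ is \emph{separable} as the (image of the) minimal polynomial of $x_i$ over $\K_j$ and $Y_j^\ell - y_{j-1}$ is irreducible over $\K_{j-1}$, so the fiber product $A$ is a field and the resultant, being its discriminant-free defining polynomial in the $X_i$-direction, is exactly $Q_{i,j-1}$ up to the leading coefficient of $Y_j^\ell-y_{j-1}$, which is $1$. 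Everything else is routine fast polynomial arithmetic.
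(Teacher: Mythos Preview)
Your overall architecture matches the paper's: compute $Q_{i,j-1}$ as the resultant $\mathrm{Res}_{Y_j}(Y_j^\ell-y_{j-1},\,Q^\star_{i,j})$ over $\K_{j-1}[X_i]$, then apply $\Phi'_{i,j}$ by Horner evaluation $Y_j\mapsto S_{i,j}(X_i)$ modulo $Q_{i,j-1}$, and apply its inverse by a fast Euclidean division of $A(X_i)$ by $Q^\star_{i,j}$ with coefficients in $\K_{j-1}[Y_j]/\langle Y_j^\ell-y_{j-1}\rangle$. These parts are essentially correct, and your observation that Kronecker substitution collapses the tri-graded degree count $\ell\cdot\ell^{i-j+1}\cdot\ell^{j-1}$ into a single $O(\Mult(\ell^{i+1}))$ is exactly what the paper relies on.

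The genuine gap is your computation of $S_{i,j}$. You propose to ``compute the first $\ell^{i-j+1}$ powers of $x_i$ in $A$'' and then ``change basis'' via ``a single modular inversion'', invoking baby-step/giant-step to stay within $O(\Mult(\ell^{i+1})\log(\ell^i))$. This does not work: even with baby-step/giant-step you need $\Theta(\sqrt{\ell^{i-j+1}})$ multiplications in $A$, each costing $O(\Mult(\ell^{i+1}))$ over $\K_0$, which blows the budget; and after obtaining those powers, expressing $Y_j$ on the power basis $1,x_i,\dots,x_i^{\ell^{i-j+1}-1}$ is a dense linear-algebra problem of size $\ell^{i-j+1}$ over $\K_{j-1}$, not a single modular inversion. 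Baby-step/giant-step accelerates modular \emph{composition}, not this change-of-basis inversion.

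The paper's idea, which you are missing, is that $S_{i,j}$ comes essentially for free from the \emph{same} subresultant computation that produces $Q_{i,j-1}$. The degree-$1$ subresultant of $Y_j^\ell-y_{j-1}$ and $Q^\star_{i,j}$ with respect to $Y_j$ has the form $a(X_i)\,Y_j-b(X_i)$ and lies in the ideal, so $Y_j\equiv a^{-1}b\pmod{Q_{i,j-1}}$. Fast subresultant algorithms (the paper cites Reischert and Lickteig--Roy) return both the resultant and the degree-$1$ subresultant in $O(\Mult(\ell^{i+1})\log(\ell))$ operations over $\K_0$; the only extra work is inverting the leading coefficient $a(X_i)$ modulo $Q_{i,j-1}$, which costs $O(\Mult(\ell^{i})\log(\ell^i))$. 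This is the step that makes the stated bound attainable.

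A minor remark: your classical Euclidean scheme with ``$O(\ell)$ pseudo-divisions'' is not what yields the logarithmic overhead; the paper uses the fast (half-gcd--type) subresultant algorithm, which gives $O(\Mult(\ell^{i+1})\log(\ell))$ directly, rather than $O(\ell)$ sequential steps whose degree growth in $X_i$ you would have to track carefully.
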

\begin{proof}
  We obtain $Q_{i,j-1}$ and $S_{i,j}$ from the resultant and degree-1
  subresultant of $Y_j^\ell-y_{j-1}$ and $Q^\star_{i,j}$ with respect
  to $Y_j$, computed over the polynomial ring $\K_{j-1}[X_i]$. This is
  done by the algorithms of~\cite{Reischert97,LiRo01}, using
  $O(\Mult(\ell^{i+1})\log(\ell))$ operations in $\K_0$ (for this
  analysis, and all others in this proof, we assume that we use Kronecker's
  substitution for multiplications). To obtain $S_{i,j}$, we
  invert the leading coefficient of the degree-1 subresultant modulo
  the resultant $Q_{i,j-1}$; this takes
  $O(\Mult(\ell^{i})\log(\ell^i))$ operations in $\K_0$.

  Applying $\Phi'_{i,j}$ amounts to taking a polynomial $A(Y_j,X_i)$ 
  reduced modulo $\langle Y_j^\ell-y_{j-1}, Q^\star_{i,j}\rangle$
  and reducing it modulo $\langle Q_{i,j-1}, Y_j-S_{i,j}\rangle$. This
  is done by computing $A(S_{i,j},X_i)$, doing all operations
  modulo $Q_{i,j-1}$. Using Horner's scheme, this takes $O(\ell)$ 
  operations $(+,\times)$ in $\K_{j-1}[X_i]/\langle Q_{i,j-1}\rangle$,
  so the complexity claim follows.

  Conversely, we start from $A(X_i)$ reduced modulo $Q_{i,j-1}$; we
  have to reduce it modulo $\langle Y_j^\ell-y_{j-1},
  Q^\star_{i,j}\rangle$. This is done using the fast Euclidean
  division algorithm with coefficients in $\K_{j-1}[Y_j]/\langle
  Y_j^\ell-y_{j-1}\rangle$ for $O(\Mult(\ell^{i+1}))$
  operations in $\K_0$.
\end{proof}

The last isomorphism $\Phi''_{i,j}$ is trivial:
$$\Phi''_{i,j}: \K_{j-1}[X_i, Y_j]/\langle Q_{i,j-1}, Y_j-S_{i,j}\rangle
\to \K_{j-1}[X_i]/\langle Q_{i,j-1}\rangle$$
forgets the variable $Y_j$; it requires no arithmetic operation.

Taking $j=i,\dots,1$ allows us to compute $Q_{i,i-1}$ and $Q_{i,0}$
for $O(i^2\Mult(\ell^{i+1})\log(\ell))$ operations in $\K_0$. Composing
the maps $\Psi_{i,j}$, we deduce further that we can apply $\Psi_i$ or
its inverse for $O(i\Mult(\ell^{i+1}))$ operations in $\K_0$.  

We claim that we can then perform push and lift between the monomial
basis in $x_i$ and the bivariate basis in $(x_{i-1},x_i)$ for the same
cost. Let us for instance explain how to lift.

We start from $A$ written on the bivariate basis in $(x_{i-1},x_i)$;
that is, $A$ is in $\K_0[X_{i-1},X_i]/\langle Q_{i-1},
T_{i}\rangle$. Apply $\Psi_{i-1}$ to its coefficients in
$x_i^0,\dots,x_i^{\ell-1}$, to rewrite $A$ as an element of
$$\K_0[Y_{i-1},X_i]/\langle Y_{i-1}^{\ell^{i-1}}-y_{i-2},
T_{i}\rangle = \K_{i-1}[X_i]/\langle Q_{i,i-1} \rangle.$$ Applying
$\Psi_{i,i}^{-1}$ gives us the image of $A$ in $\K_i$, and applying
$\Psi_i$ finally brings it to $\K_0[X_i]/\langle Q_{i}\rangle$.


\section{Towers from irreducible fibers}
\label{sec:fibers}

In this section we discuss another construction of the $\ell$-adic
tower based on work of Couveignes and
Lercier~\cite{couveignes+lercier11}. The results of this section are
summarized in rows 2, 4 and 5 of Table~\ref{table:main}. This
construction is not unrelated to the ones of the previous section, and
indeed we will start by showing how those of Sections~\ref{ssec:T1}
and~\ref{ssec:T2} reduce to it.

Here is the bottom line of Couveignes' and Lercier's idea. Let $G, G'$
be integral algebraic $\F_q$-groups of the same dimension and let $\phi:
G' \rightarrow G$ be a surjective, separable algebraic group morphism.
Let $\ell$ be the degree of $\phi$; then, the set of points $x \in G$
with fiber $G'_x$ of cardinality $\ell$ is a nonempty open subset $U
\subset G$. If the induced homomorphism $G'(\F_q) \rightarrow G(\F_q)$
of groups is not surjective then there are points of $G(\F_q)$ with
fibers lying in algebraic extensions of $\F_q$. Assume that we are
able to choose $\phi$ so that we can find one of these points contained
in $U$, with an irreducible fiber, and apply a linear projection to
this fiber (e.g., onto an axis). The resulting polynomial is
irreducible of degree dividing $\ell$ (and expectedly equal to
$\ell$). If we can repeat the construction with a new map $\phi':G''\to
G'$, and so on, the sequence of extensions makes an $\ell$-adic tower
over $\F_q$.


\subsection{Towers from algebraic tori}
\label{ssec:fibers-T2}
In~\cite{couveignes+lercier11}, Couveignes and Lercier explain how
their idea yields the tower of Section~\ref{ssec:T1}. Consider the
multiplicative group $\mathbb{G}_m$: this is an algebraic group of
dimension one, and $\mathbb{G}_m(\F_q)$ has cardinality $q-1$.  The
$\ell$-th power map defined by $\phi:X\mapsto X^\ell$ is a degree
$\ell$ algebraic endomorphism of $\mathbb{G}_m$, surjective over the
algebraic closure.

Suppose that $\ell$ divides $q-1$, and let $\eta$ be a non $\ell$-adic
residue in $\F_q$ ($\eta$ plays here the same role as $y_0$ in
Section~\ref{sec:LDtower}). For any $i>0$, the fiber $\phi^{-i}(\eta)$
is defined by $X^{\ell^i}-\eta$: we recover the construction of
Subsection~\ref{ssec:T1}.

More generally, let $\F_{q^n}/\F_q$ be a finite extension and define
its \emph{maximal torus} as
\begin{equation}
  \label{eq:Tn}
  T_n = \{\alpha\in\F_{q^n} \;|\; \Norm_{\F_{q^n}/\F_{q^m}}(\alpha) = 1 
  \text{ for any $m|n$} \}.
\end{equation}
$T_n$ is a multiplicative subgroup of $\F_q^\ast$, and, by Weil
descent, an algebraic group over $\F_q$. It has dimension $\euler(n)$,
cardinality $\Phi_n(q)$, and is isomorphic to
$\mathbb{G}_m^{\euler(n)}$ over $\bar \F_q$~\cite{rubin+silverberg03,voskresenskii98}.

We now detail how the construction of Section~\ref{ssec:T2} can be
obtained by considering the torus $T_2$; this will allow us to start
completing the second row in Table~\ref{table:main}.

\begin{lemma}
  Let $\Delta\in\F_q$ be a quadratic non-residue if $p\ne2$, or such
  that $\Tr_{\F_q/\F_2}(\Delta)=1$ otherwise. Let $\delta=\sqrt{\Delta}$
  or $\delta^2+\delta=\Delta$ accordingly. The
  maximal torus $T_2$ of $\F_q(\delta)/\F_q$ is isomorphic to the
  \emph{Pell conic}
  \begin{equation}
    \label{eq:Pell}
    C \;:\; 
    \begin{cases}
      x^2 - \Delta y^2 = 4 &\text{if $p\ne2$,}\\
      x^2\Delta + xy + y^2 = 1 &\text{if $p=2$.}
    \end{cases}
  \end{equation}
  Multiplication in $T_2$ induces a group law on $C$. The neutral
  element is $(2,0)$ if $p\ne2$, or $(0,1)$ if $p=2$. The sum of two
  points $P=(x_1,y_1)$ and $Q=(x_2,y_2)$ is defined
  by
  \begin{equation*}
    P\oplus Q =
    \begin{cases}
      \displaystyle
      \left(\frac{x_1x_2 + \Delta y_1y_2}{2},\; \frac{x_1y_2 + x_2y_1}{2}\right) &
      \text{if $p\ne2$,}\\
      \left(x_1x_2 + x_1y_2 + x_2y_1,\; x_1x_2\Delta + y_1y_2\right) &
      \text{if $p=2$.}
    \end{cases}
  \end{equation*}
\end{lemma}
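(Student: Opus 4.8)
\noindent
The plan is to realise $T_2$ explicitly as the norm-one subgroup of $\F_q(\delta)^\ast$ and then simply transport its group structure to $C$ through a linear change of coordinates. First I would record two preliminary facts. The hypothesis on $\Delta$ makes $X^2-\Delta$ (when $p\ne2$), resp.\ $X^2+X+\Delta$ (when $p=2$), irreducible over $\F_q$, by Euler's criterion resp.\ the Artin--Schreier trace criterion; hence $\F_q(\delta)=\F_{q^2}$. And by Eq.~\eqref{eq:Tn} the group $T_2$ is exactly $\{\alpha\in\F_{q^2}\mid\Norm_{\F_{q^2}/\F_q}(\alpha)=1\}$, which (as recalled just before the lemma) is a one-dimensional algebraic group over $\F_q$; fixing the $\F_q$-basis $\{1,\delta\}$ of $\F_{q^2}$ realises it inside $\mathbb{A}^2$ with the group law coming from multiplication in $\F_{q^2}$.

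Next I would compute the nontrivial automorphism $z\mapsto z^q$ of $\F_{q^2}/\F_q$ on $\delta$. When $p\ne2$, $\delta^q=\delta\,(\delta^2)^{(q-1)/2}=\delta\,\Delta^{(q-1)/2}=-\delta$ since $\Delta$ is a non-residue; when $p=2$, $\delta$ and $\delta^q$ are the two roots of $X^2+X+\Delta$, so $\delta+\delta^q=1$, i.e.\ $\delta^q=\delta+1$. Then I would parametrise a general element of $\F_{q^2}$ as $\alpha=(x+y\delta)/2$ if $p\ne2$, and as $\alpha=x\delta+y$ if $p=2$ (the $1/2$ is only a normalisation that makes the resulting formulas symmetric).

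The rest is direct computation. Evaluating $\Norm_{\F_{q^2}/\F_q}(\alpha)=\alpha\,\alpha^q$ in these coordinates, using the conjugation found above, yields $(x^2-\Delta y^2)/4$ resp.\ $x^2\Delta+xy+y^2$; setting it equal to $1$ is exactly the equation of $C$ from Eq.~\eqref{eq:Pell}. Since $\alpha\mapsto(x,y)$ and its inverse are $\F_q$-linear, this is an isomorphism of affine $\F_q$-varieties carrying $T_2$ onto $C$; the operation $\oplus$ on $C$ is then, by definition, the transport of multiplication, so $(C,\oplus)$ is automatically a group and $T_2\simeq(C,\oplus)$. Expanding $\alpha_1\alpha_2$ and reducing with $\delta^2=\Delta$ resp.\ $\delta^2=\delta+\Delta$ reads off precisely the claimed formula for $\oplus$, and the image of $1\in T_2$ is $(2,0)$ resp.\ $(0,1)$, the neutral element.

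There is no substantial obstacle: the statement amounts to choosing good coordinates and bookkeeping. The only points that need a little attention are verifying that the hypotheses on $\Delta$ really do yield the irreducibility (hence $\F_q(\delta)=\F_{q^2}$) used throughout, and running the two characteristics in parallel without sign or factor-of-two slips in the computations.
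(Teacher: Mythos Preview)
Your proposal is correct and follows essentially the same route as the paper: both realise $T_2$ as the norm-one subgroup, perform Weil descent with respect to the basis $(1/2,\delta/2)$ for $p\ne2$ and $(\delta,1)$ for $p=2$, compute the Galois conjugate of $\delta$, and then read off the conic equation and group law by direct calculation. Your write-up is in fact more careful than the paper's (which inadvertently swaps the two cases when computing $\delta^\sigma$); nothing further is needed.
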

\begin{proof}
  The isomorphism follows by Weil descent with respect to the basis
  $(1/2,\delta/2)$ if $p\ne2$, or $(\delta,1)$ if $p=2$. Indeed, by
  virtue of Eq.~\eqref{eq:Tn}, an element $(x,y)$ of $\F_q(\delta)$
  belongs to $T_2$ if and only if its norm over $\F_q$ is $1$.

  Let $\sigma$ be the generator of $\Gal_{\F_q(\delta)/\F_q}$. For
  $p=2$, clearly $\delta^\sigma=-\delta$. For $p\ne2$, by
  Artin-Schreier theory,
  $\Tr_{\F_q(\delta)/\F_q}(\delta)=\Tr_{\F_q/\F_2}(\Delta)=1$, hence
  $\delta^\sigma=1+\delta$. In both cases, Eq.~\eqref{eq:Pell}
  follows.  The group law is obtained by direct calculation.
\end{proof}

Pell conics are a classic topic in number theory\cite{lenstra02-pell}
and computer science, with applications to primality proving,
factorization~\cite{lemmermeyer03,hambleton12} and
cryptography~\cite{rubin-silverberg+crypto03}. 

As customary, we denote by $[n](x,y)$ the $n$-th scalar multiple of a
point $(x,y)$.

\begin{lemma}
  \label{th:T2-divpol}
  If $(n,p)=1$, then $[n]$ is a separable endomorphism of $C$ of
  degree $n$, given by the rational maps
  \begin{equation}
    \label{eq:Pell-rec}
    [n](x,y) = 
    \begin{cases}
      \bigr(P_n(x), y R_n(x)\bigl) & \text{if $p\ne2$,}\\
      \bigr(P_n(x), y R_n(x) + R_{n-1}(x)\bigl) & \text{if $p=2$.}      
    \end{cases}
  \end{equation}
  where $P_n$ and $R_n$ are defined by the initial values
  \begin{align*}
    P_0 &= 2,\qquad P_1=X,\\
    R_0 &= 0,\qquad R_1=1,
  \end{align*}
  and by the same recurrence $u_{n+1} = Xu_{n} - u_{n-1}$.
\end{lemma}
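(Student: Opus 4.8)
The plan is to \emph{uniformize} $C$ by $\mathbb{G}_m$ over $\bar{\F}_q$ and transport everything to the $n$-th power map. By the previous lemma $C \cong T_2$, where $T_2$ is the maximal torus of the quadratic extension $\F_q(\delta)/\F_q$; this torus has dimension $\euler(2)=1$ and becomes isomorphic to $\mathbb{G}_m$ over $\bar{\F}_q$, as recalled above. Under that isomorphism the scalar multiple $[n]$ — which is simply the map $\alpha\mapsto\alpha^n$ inside the group $T_2$ — corresponds to the $n$-th power endomorphism $t\mapsto t^n$ of $\mathbb{G}_m$. Separability and degree then follow at once: $t\mapsto t^n$ has derivative $nt^{n-1}$, which is nonzero exactly when $(n,p)=1$, and it is finite of degree $n$; hence so is $[n]$ on $C$.

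For the explicit rational maps I would first record the closed forms of the auxiliary polynomials. A straightforward induction on $n$ from the common recurrence $u_{n+1}=Xu_n-u_{n-1}$ and the stated initial values gives, in $\mathbb{Z}[t,t^{-1}]$ and hence over any field,
\[
P_n(t+t^{-1}) = t^n + t^{-n}, \qquad (t-t^{-1})\,R_n(t+t^{-1}) = t^n - t^{-n};
\]
in particular $\deg P_n=n$ and $\deg R_n=n-1$. This is the only genuinely computational step, and it is routine.

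Next I would make the uniformization explicit in each characteristic and read off the formulas. For $p\ne 2$ a point of $C$ is $\alpha = x/2 + (y/2)\delta\in T_2$ with $\Norm_{\F_q(\delta)/\F_q}(\alpha)=1$, so $\alpha^\sigma=\alpha^{-1}$; writing $t=\alpha$ gives $x=\alpha+\alpha^{-1}=t+t^{-1}$ and $y\delta=\alpha-\alpha^{-1}$. Replacing $\alpha$ by $\alpha^n$, the new coordinates are $t^n+t^{-n}$ and $(t^n-t^{-n})/\delta$, which by the closed forms equal $P_n(x)$ and $yR_n(x)$. For $p=2$ a point is $\alpha=x\delta+y$ with $\delta^\sigma=\delta+1$, so $\alpha+\alpha^\sigma=x$ and, with $t=\alpha$, $x=t+t^{-1}$ and $y=t+x\delta$; applying $\alpha\mapsto\alpha^n$ the new $x$-coordinate is again $P_n(x)$ and the new $y$-coordinate is $t^n+P_n(x)\delta$, and using $t-t^{-1}=t+t^{-1}$ in characteristic $2$ one checks $(t+t^{-1})R_n(x)=P_n(x)$ and $tR_n(x)+R_{n-1}(x)=t^n$, whence $t^n+P_n(x)\delta = yR_n(x)+R_{n-1}(x)$. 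These identities hold on the image of the uniformization map $\mathbb{G}_{m,\bar{\F}_q}\to C_{\bar{\F}_q}$, which is dense — indeed the map is an isomorphism — so they hold as identities of morphisms $C_{\bar{\F}_q}\to C_{\bar{\F}_q}$; since $[n]$, $P_n$ and $R_n$ are all defined over $\F_q$, they descend to $\F_q$.

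The main obstacle is more a bookkeeping hazard than a real difficulty: the characteristic-$2$ case. The extra summand $R_{n-1}(x)$ in the $y$-coordinate is invisible in the $p\ne2$ computation, and recovering it correctly forces one to use $\delta^\sigma=\delta+1$ (rather than $-\delta$) together with the characteristic-$2$ identity $tR_n(t+t^{-1})+R_{n-1}(t+t^{-1})=t^n$. One should also fix beforehand whether $C$ denotes the affine Pell conic or its smooth projective model when speaking of the degree of $[n]$, although the argument is insensitive to that choice.
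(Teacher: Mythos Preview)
Your argument is correct. The separability and degree claims are handled exactly as in the paper, by identifying $C$ with $\mathbb{G}_m$ over the algebraic closure and reading both off from the $n$-th power map.

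For the explicit formulas, however, you take a different route from the paper. The paper proves Eq.~\eqref{eq:Pell-rec} by a direct induction on $n$ using the addition law on $C$: assuming the formula for $[n]$, one computes $[n+1](x,y)=(x,y)\oplus[n](x,y)$ from the group-law expressions and checks that the recurrence $u_{n+1}=Xu_n-u_{n-1}$ pops out, invoking the curve equation~\eqref{eq:Pell} to eliminate $y^2$ (or $\Delta$). You instead establish the Chebyshev-type identities $P_n(t+t^{-1})=t^n+t^{-n}$ and $(t-t^{-1})R_n(t+t^{-1})=t^n-t^{-n}$ once and for all in $\mathbb{Z}[t,t^{-1}]$, then pull them back through the explicit parametrization $t\mapsto(x,y)$ in each characteristic. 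Your approach is a little more conceptual and has the pleasant side effect of exhibiting the closed forms that the paper will anyway exploit later when computing the coefficients of $P_n$; the paper's approach is shorter and entirely self-contained, needing only the group law already written down. Your handling of the characteristic-$2$ case via the integral identity $tR_n(t+t^{-1})-R_{n-1}(t+t^{-1})=t^n$ is clean and avoids the case analysis one would otherwise face in the direct induction.
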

\begin{proof}
  We know that $C\isom\mathbb{G}_m$, thus $C[n]\isom\Z/n\Z$ and $[n]$
  is separable of degree $n$. Eq.~\eqref{eq:Pell-rec} is shown by
  induction using Eq.~\eqref{eq:Pell} and the group law.
\end{proof}


\begin{theorem}
  \label{th:T2-irred}
  Let $\eta\in\F_q(\delta)$ be a non $\ell$-adic residue in $T_2$, and
  let $P=(\alpha,\beta)$ be its image in $C/\F_q$. For any $i>0$, the
  polynomials $P_{\ell^i}-\alpha$ are irreducible. Their roots are the
  abscissas of the images in $C/\F_{q^{\ell^i}}$ of the $\ell^i$-th
  roots of $\eta$.
\end{theorem}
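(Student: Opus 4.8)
The plan is to transfer the statement to the torus $T_2$ through the lemma identifying $T_2$ with the Pell conic $C$ and through the formula $[\ell^i](x,y)=(P_{\ell^i}(x),\dots)$ of Lemma~\ref{th:T2-divpol}, and then to run the classical Kummer argument for the cyclic group $T_2$. I would begin by recording the numerical facts. Set $v=v_\ell(q+1)\ge1$; since $\ell$ is odd and divides $q+1$, lifting the exponent gives $v_\ell(q^{\ell^i}+1)=v+i$ for every $i$ (so also $v_\ell(q^{\ell^{i-1}}+1)=v+i-1$), and $\ell^i$ odd gives $(q+1)\mid(q^{\ell^i}+1)$. For odd $m$, $C(\F_{q^m})$ is the norm-one subgroup of $\F_{q^{2m}}^\times$, hence cyclic of order $q^m+1$. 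Finally, $T_2=C(\F_q)$ is cyclic of order $q+1$, its $\ell$-th powers forming the subgroup of index $\ell$, so that $\eta$ being a non-$\ell$-adic residue is the same as $v_\ell(\operatorname{ord}\eta)=v$; in particular $\eta$ is not $1$, nor $-1$ when $p\ne2$, since these are $\ell$-th powers (as $\ell\nmid2$), so $\eta$ has order not dividing $2$: thus $\beta\ne0$ if $p\ne2$, while $P$ is not the identity if $p=2$, the only $2$-torsion point there.

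Next comes the geometric reduction. Let $\iota=[-1]$ be the inversion of $C$, which negates the ordinate if $p\ne2$ and is $(x,y)\mapsto(x,x+y)$ if $p=2$; either way the two points of $C$ with a prescribed abscissa form a single $\iota$-orbit. Put $F=[\ell^i]^{-1}(P)$. Since $\iota$ commutes with $[\ell^i]$ and $[\ell^i](x,y)=(P_{\ell^i}(x),\dots)$, a point $(x,y)\in C(\bar\F_q)$ satisfies $P_{\ell^i}(x)=\alpha$ exactly when it lies in $F\cup\iota F=[\ell^i]^{-1}(P)\cup[\ell^i]^{-1}(\iota P)$, so the roots of $P_{\ell^i}-\alpha$ are precisely the abscissas of the points of $F$. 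By Lemma~\ref{th:T2-divpol} the map $[\ell^i]$ is separable of degree $\ell^i$, hence $|F|=\ell^i$; and since $\ell^i$ is odd, a $2$-torsion point $Q\in F$ would give $P=[\ell^i]Q=Q$, impossible by the previous paragraph, so $\iota$ fixes no point of $F$ and $F\cap\iota F=\emptyset$ (a common point would force $P=\iota P$). Therefore the abscissa map is injective on $F$, the degree-$\ell^i$ polynomial $P_{\ell^i}-\alpha$ has exactly these $\ell^i$ distinct abscissas as roots and is in particular separable, and the points of $F$ are exactly the images in $C$ of the $\ell^i$-th roots of $\eta$ in $T_2$.

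It remains to prove that $\Gal_{\F_{q^{\ell^i}}/\F_q}$ acts transitively on $F$; this yields simultaneously that $F\subset C(\F_{q^{\ell^i}})$ with this field minimal and --- since $P_{\ell^i}-\alpha\in\F_q[X]$ is separable with transitively permuted roots --- that $P_{\ell^i}-\alpha$ is irreducible over $\F_q$. I would identify $C(\bar\F_q)$ with $\bar\F_q^\times$; then $[\ell^i]$ is the $\ell^i$-th power map, $\eta$ becomes some $u$ with $u^{q+1}=1$ and $v_\ell(\operatorname{ord}u)=v$, and --- because $T_2$ is the \emph{anisotropic} norm-one torus --- the $q$-power Frobenius acts by $w\mapsto w^{-q}$, so $C(\F_{q^m})=\mu_{q^m+1}$ for odd $m$. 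For $w\in F$ one has $w^{\ell^i}=u$, whence $v_\ell(\operatorname{ord}w)=v+i$ while its prime-to-$\ell$ part divides $q+1$; together with the valuation facts above this gives $\operatorname{ord}w\mid q^{\ell^i}+1$, so $w\in C(\F_{q^{\ell^i}})$, all of $F$ lies there, and the Galois orbit of $w$ has size dividing $\ell^i$. On the other hand $v_\ell(\operatorname{ord}w)=v+i>v+i-1=v_\ell(q^{\ell^{i-1}}+1)$ shows $\operatorname{ord}w\nmid q^{\ell^{i-1}}+1$, so $w\notin C(\F_{q^{\ell^{i-1}}})$ and its orbit has size exactly $\ell^i=|F|$; since Frobenius fixes $u$ the orbit stays inside $F$, hence equals $F$.

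The step I expect to be the main obstacle is pinning down the exact field of definition of the fiber: one must use the correct $w\mapsto w^{-q}$ action of Frobenius on the anisotropic torus (rather than the naive $w\mapsto w^q$), together with the lifting-the-exponent values of $v_\ell(q^{\ell^i}+1)$; once these are in hand, the Kummer-type transitivity is forced simply by matching $|F|$ against the degrees $[\F_{q^{\ell^{i-1}}}:\F_q]$ and $[\F_{q^{\ell^i}}:\F_q]$.
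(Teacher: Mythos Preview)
Your argument is correct, but it takes a noticeably different route from the paper's. The paper dispatches the transitivity of the Galois action in one line by invoking Lang's irreducibility criterion for binomials: since $\eta$ is not an $\ell$-th power in $\F_q(\delta)^\times$, the polynomial $X^{\ell^i}-\eta$ is irreducible over $\F_q(\delta)$, so $\Gal_{\F_{q^{2\ell^i}}/\F_{q^2}}$ already permutes the $\ell^i$-th roots of $\eta$ transitively, and the rest follows from the observation $\eta\ne\eta^{-1}$ (equivalent to your order-not-dividing-$2$ remark) together with Lemma~\ref{th:T2-divpol}. You instead reprove this Kummer step by hand: you compute $v_\ell(q^{\ell^i}+1)$ via lifting the exponent, identify $C(\F_{q^m})$ with $\mu_{q^m+1}$ through the explicit Frobenius action $w\mapsto w^{-q}$ on the anisotropic torus, and pin down the exact field of definition of each point of the fiber by matching $\ell$-adic valuations of orders. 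Your approach is more self-contained and makes the role of the torus structure completely explicit, at the cost of being substantially longer; the paper's approach is a two-line reduction once the external reference is granted. Both establish the same injectivity of the abscissa map on the fiber by the same argument (no $2$-torsion in $F$, $F\cap\iota F=\emptyset$).
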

\begin{proof}
  By~\cite[Th.~VI.9.1]{lang}, the polynomial $X^{\ell^i}-\eta$ is
  irreducible. Its roots correspond to the fiber $[\ell^i]^{-1}(P)$,
  and the Galois group of $\F_{q^{\ell^i}}/\F_q$ acts transitively on
  them.

  Two points of $C$ have the same abscissa if and only if they are
  opposite. But $\eta\ne\eta^{-1}$, hence all the points in
  $[\ell^i]^{-1}(P)$ have distinct abscissa.  By
  Lemma~\ref{th:T2-divpol}, $P_{\ell^i}-\alpha$ vanishes precisely on
  those abscissas and is thus irreducible.
\end{proof}

We can now apply our results to the computation of the polynomials
$Q_i$ and $T_i$ of Section~\ref{ssec:T2}.

\begin{corollary}
  The polynomials $Q_{i,j}$ of Prop.~\ref{th:T2-resultant}
  satisfy
  \begin{equation*}
    Q_{i,j}(X_i) = P_{\ell^{i-j}}(X_i) - x_j.
  \end{equation*}
\end{corollary}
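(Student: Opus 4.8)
The plan is to connect the two descriptions of $Q_{i,j}$: the one from Proposition~\ref{th:T2-resultant}, which expresses $Q_{i,j}$ modulo $Y^2 - X_i Y + 1$ in terms of $Y^{\ell^{i-j}} + Y^{-\ell^{i-j}}$, and the one from Theorem~\ref{th:T2-irred}, which identifies the roots of $P_{\ell^{i-j}} - \alpha$ as abscissas of fibers. First I would recall that, in the notation of Section~\ref{ssec:T2}, $x_j$ is (essentially) the abscissa coordinate of the point attached to $y_j$ on the Pell conic $C$: indeed $y_j$ satisfies $Y_j^2 - x_j Y_j + 1 = 0$, so with $Y = Y_j$ one has $x_j = Y_j + Y_j^{-1}$, which is precisely the first coordinate of the corresponding point of $C$ (up to the normalization $\delta/2$ already fixed in the lemma on the Pell conic). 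So the setup of Prop.~\ref{th:T2-resultant} is literally the arithmetic-geometry setup of this section with $\eta \leftrightarrow y_0$ and $\alpha \leftrightarrow x_0$.

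The key step is then purely a Chebyshev-type identity: the polynomials $P_n$ defined by $P_0 = 2$, $P_1 = X$, $u_{n+1} = X u_n - u_{n-1}$ satisfy
\begin{equation*}
  P_n(Y + Y^{-1}) = Y^n + Y^{-n}
\end{equation*}
for all $n \ge 0$. This is immediate by induction: it holds for $n = 0, 1$, and $(Y^{n+1} + Y^{-(n+1)}) = (Y+Y^{-1})(Y^n + Y^{-n}) - (Y^{n-1} + Y^{-(n-1)})$ matches the recurrence. Substituting $Y \bmod Y^2 - X_i Y + 1$ — so that $X_i = Y + Y^{-1}$ in the quotient — gives $P_{\ell^{i-j}}(X_i) = Y^{\ell^{i-j}} + Y^{-\ell^{i-j}} \bmod Y^2 - X_i Y + 1$. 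Comparing with Eq.~\eqref{eq:T2-relpols} yields $Q_{i,j}(X_i) = P_{\ell^{i-j}}(X_i) - x_j$, as claimed. One should note that the right-hand side $P_{\ell^{i-j}}(X_i) - x_j$ does not actually involve $Y$, so the reduction modulo $Y^2 - X_i Y + 1$ on the left of Eq.~\eqref{eq:T2-relpols} is consistent; alternatively one can observe that both sides, viewed as polynomials in $X_i$ over $\F_q(x_j)$, are monic of the same degree $\ell^{i-j}$ and vanish on the same set of roots (the abscissas from Theorem~\ref{th:T2-irred}), hence coincide.

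The main obstacle — really the only thing requiring care — is bookkeeping the normalizations: making sure the ``$x_j = Y_j + Y_j^{-1}$'' identification is exactly the abscissa used in $P_{\ell^i} - \alpha$, with the factor of $\frac12$ (resp.\ the $p=2$ change of variables) from the Pell-conic lemma accounted for, and checking that the recurrence for $P_n$ here is the one producing $Y^n + Y^{-n}$ rather than a shifted or rescaled variant (e.g.\ the classical $T_n$ with $T_n(\cos\theta) = \cos n\theta$, which corresponds to $Y + Y^{-1} = 2\cos\theta$ and hence needs the $P_0 = 2$, not $P_0 = 1$, initial value — which is exactly what the statement of Lemma~\ref{th:T2-divpol} uses). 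Once these conventions are pinned down, the corollary is a one-line consequence of the Chebyshev identity together with Proposition~\ref{th:T2-resultant}.
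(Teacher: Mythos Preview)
Your proof is correct. Your primary argument---proving the Chebyshev identity $P_n(Y+Y^{-1}) = Y^n + Y^{-n}$ by induction from the recurrence in Lemma~\ref{th:T2-divpol} and then comparing with Eq.~\eqref{eq:T2-relpols}---is a genuinely different route from the paper's. The paper instead invokes Theorem~\ref{th:T2-irred} directly: it observes that $y_j$ is a non $\ell$-adic residue in $T_2$ over $\F_q(x_j)$ whose abscissa on $C$ is $\Tr_{\K_j/\F_q(x_j)}(y_j) = x_j$, so by that theorem $P_{\ell^{i-j}}(X_i) - x_j$ is irreducible with root $x_i$, hence coincides with the minimal polynomial $Q_{i,j}$. (Your closing alternative, ``same degree, same roots'', is essentially this argument.) Your Chebyshev route is more self-contained---it bypasses Theorem~\ref{th:T2-irred} entirely and uses only Proposition~\ref{th:T2-resultant}---while the paper's route is terser once the Pell-conic machinery is in place and keeps the geometric interpretation front and center. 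As for the normalization you flag: it does work out cleanly, since the abscissa on $C$ is precisely the trace $y_j + y_j^{-1} = x_j$, with no stray factor of~$2$.
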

\begin{proof}
  We have already shown that $\Norm_{\K_j/\F_q(x_j)}(y_j)=1$ for any
  $j$, thus $y_j$ is a non $\ell$-adic residue in
  $T_2/\F_q(x_j)$. Independently of the characteristic and of the
  element $\Delta\in\F_q(x_j)$ chosen, the abscissa of the image of
  $y_j$ in $C/\F_q(x_j)$ is $\Tr_{\K_j/\F_q(x_j)}y_j=x_j$. The
  statement follows from the previous theorem.
\end{proof}

\begin{corollary}
  The polynomials $Q_{i,j}$ can be computed using $O(\ell^{i-j})$
  operations.
\end{corollary}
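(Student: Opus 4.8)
The plan is to reduce to computing a single Chebyshev-type polynomial, whose coefficients can then be produced one at a time from an explicit recurrence, so that no polynomial multiplication is ever performed. By the preceding corollary, $Q_{i,j}(X_i) = P_N(X_i) - x_j$ with $N = \ell^{i-j}$, and subtracting $x_j$ only affects the constant coefficient; so it suffices to compute $P_N$ in $O(N)$ operations. The polynomial $P_N$ is monic of degree $N$ and, by the parity relation $P_n(-X) = (-1)^n P_n(X)$ (clear from $P_0=2$, $P_1=X$ and $u_{n+1}=Xu_n-u_{n-1}$), involves only the monomials $X^N, X^{N-2}, \dots$, hence has at most $\lfloor N/2\rfloor + 1$ nonzero coefficients. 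Writing $P_N = \sum_{k=0}^{\lfloor N/2\rfloor} c_k X^{N-2k}$, the three-term recurrence of Lemma~\ref{th:T2-divpol} — equivalently the classical closed form $c_k = (-1)^k\frac{N}{N-k}\binom{N-k}{k}$, or the Chebyshev differential equation $(4-X^2)P_N'' - XP_N' + N^2 P_N = 0$ — gives $c_0 = 1$ and
\[
 c_k = -\frac{(N-2k+2)(N-2k+1)}{k(N-k)}\,c_{k-1}, \qquad 1 \le k \le \lfloor N/2\rfloor.
\]
Since each $c_k$ costs $O(1)$ operations given $c_{k-1}$, all of $P_N$, and hence $Q_{i,j}$, is computed in $O(N) = O(\ell^{i-j})$ operations.

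The step that needs care — and the main obstacle to the clean linear bound — is that the denominator $k(N-k)$ can vanish in $\F_q$ in small characteristic, precisely when $p \mid k$ or $p \mid N-k$ (which does happen once $2p \le \ell^{i-j}$), so one cannot simply divide. I would keep each $c_k$ in the form $c_k = p^{v_k}u_k$ with $v_k \in \N$ and $u_k$ a unit modulo $p$: since $p \nmid N$ and every $c_k$ is a priori an integer, one splits off the $p$-parts of the four small integers $k$, $N-k$, $N-2k+2$, $N-2k+1$, updates $u_k$ using only units modulo $p$ (an operation in $\F_p \subseteq \F_q$), and sets $v_k = v_{k-1} + v_p((N-2k+2)(N-2k+1)) - v_p(k(N-k)) \ge 0$; then $c_k \bmod p$ is $u_k$ when $v_k = 0$ and $0$ otherwise. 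Computing all these valuations costs $O(N)$ in total, by the standard bound $\sum_k v_p(k) = v_p((\lfloor N/2\rfloor)!) = O(N)$ (and likewise for the other three sequences), so the overall count stays $O(\ell^{i-j})$. An alternative that avoids division entirely is to combine the identity $c_k = (-1)^k(\binom{N-k}{k} + \binom{N-k-1}{k-1})$ with Lucas' theorem, maintaining the residues of the two binomial coefficients incrementally as $k$ grows; either way, subtracting $x_j$ at the end is negligible.
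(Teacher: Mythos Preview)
Your proof is correct and follows essentially the same route as the paper: reduce to computing $P_N$, note the parity and sign alternation, derive the same ratio recurrence
\[
\frac{c_k}{c_{k-1}} = -\frac{(N-2k+2)(N-2k+1)}{k(N-k)},
\]
and handle the small-characteristic division issue by carrying each coefficient as a pair $(v_p,\text{unit})$ --- which is exactly what the paper means by ``compute in $\Q_p$ with relative precision~1''. Your explicit $O(N)$ bound on the total cost of extracting the $p$-parts via $\sum_k v_p(k)=O(N)$ is a detail the paper leaves implicit, and your alternative identity $c_k=(-1)^k\bigl(\binom{N-k}{k}+\binom{N-k-1}{k-1}\bigr)$ is the Lucas-triangle formula the paper also records.
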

\begin{proof}
  From the previous corollary, it is enough to compute $P_n$ using
  $O(n)$ operations. We write $P_n = \sum_i c_{n,i}X^{n-i}$, from
  Lemma~\ref{th:T2-divpol} we deduce that
  \begin{equation}
    c_{n,i} = c_{n-1,i} - c_{n-2,i-2}.
  \end{equation}
  By induction, it is immediate that $c_{n,i}=0$ for $i$ odd, and that
  signs alternate for $i$ even, so we remove the odd coefficients and
  take absolute values. The new coefficients $b_{n,k}=\lvert
  c_{n+k,2k}\rvert$ satisfy the relation
  \begin{equation*}
    b_{n,k} = b_{n-1,k} + b_{n-1,k-1},
  \end{equation*}
  which is the same as Pascal's relation; we actually obtain the
  $(1,2)$-Pascal triangle, also called Lucas'
  triangle~\cite{benjamin10}.  In the same way, we can prove that the
  even coefficients of $R_n$ are the entries of Pascal's triangle with
  alternating signs.
  
  As is well-known, the coefficients of
  Lucas' triangle are related to those of Pascal's by
  \begin{equation}
    \label{eq:lucas-tri}
    b_{n,k} = \binom{n}{k} + \binom{n-1}{k-1} = \frac{n+k}{n}\binom{n}{k}.
  \end{equation}
  Using Eq.~\eqref{eq:lucas-tri} and the sign alternation property,
  we get
  \begin{equation*}
    \frac{c_{n,2k+2}}{c_{n,2k}} = 
    -\frac{(n-2k)(n-2k-1)}{(n-k-1)(k+1)}.
  \end{equation*}
  The last equation gives the formula to compute all the coefficients
  of $P_n$ using $O(n)$ operations in $\F_p$. Indeed, since we know
  the $c_{n,2k}$'s are the image mod $p$ of integers, we compute them
  using multiplications and divisions in $\Q_p$ with relative
  precision 1.
\end{proof}

We are left with the problem of finding the non $\ell$-adic residue
$\eta$ to initialize the tower. As before, this will be done by random
sampling and testing.

\begin{lemma}
  \label{th:montgomery}
  Let $P=(\alpha,\beta)$ be a point on $C$. For any $n$, there is a
  formula to compute the abscissa of $[\pm n]P$, using $O(\log n)$
  operations in $\F_q$, and not involving $\beta$.
\end{lemma}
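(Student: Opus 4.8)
The plan is to reduce the lemma to the fast evaluation of the polynomials $P_n$ of Lemma~\ref{th:T2-divpol} at the abscissa of $P$, and then to run a binary ``differential addition'' ladder of Montgomery type. First I would recall that, by Lemma~\ref{th:T2-divpol}, when $(n,p)=1$ the abscissa of $[n]P$ is exactly $P_n(\alpha)$ — a quantity involving $\alpha$ but not $\beta$; more generally, using the isomorphism $C\isom\mathbb{G}_m$ over $\bar{\F}_q$, a point corresponds to a torus element $\zeta$ whose trace $\zeta+\zeta^{-1}$ is its abscissa (as already observed in the proof of the corollary to Theorem~\ref{th:T2-irred}), so that $[n]P$ corresponds to $\zeta^n$ and has abscissa $\zeta^n+\zeta^{-n}=P_n(\alpha)$ for every $n$. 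Since $[-n]P$ is the opposite of $[n]P$, and opposite points share their abscissa (proof of Theorem~\ref{th:T2-irred}), ``the abscissa of $[\pm n]P$'' is well defined and equals $P_n(\alpha)$. It remains to compute $P_n(\alpha)$ in $O(\log n)$ operations in $\F_q$, the cases $n=0$ (neutral element, abscissa $P_0(\alpha)$) and $n<0$ being already settled.

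Next I would record the addition identities for the family $(P_n)$. Writing $X=\zeta+\zeta^{-1}$ and $P_n(X)=\zeta^n+\zeta^{-n}$ — equivalently, arguing directly from the recurrence $u_{n+1}=Xu_n-u_{n-1}$ with $P_0=2$, $P_1=X$ of Lemma~\ref{th:T2-divpol} — one obtains the relations
\begin{equation*}
  P_mP_n = P_{m+n} + P_{|m-n|}, \qquad P_{2n} = P_n^2 - 2, \qquad P_{2n+1} = P_nP_{n+1} - P_1 ,
\end{equation*}
which hold over $\Z[X]$, and hence in every characteristic (for $p=2$ the constant $2$ simply vanishes and $P_{2n}=P_n^2$). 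The crucial point is that the ``difference'' term appearing in the odd case is $P_1=X$, independent of $n$: this is what allows a single, $\beta$-free update rule in the ladder.

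Then I would run the ladder on the pair $\bigl(P_m(\alpha),P_{m+1}(\alpha)\bigr)$. Write $n=(e_t e_{t-1}\cdots e_0)_2$ with $e_t=1$, and set $m_k=\lfloor n/2^{\,t-k}\rfloor$, so that $m_0=1$, $m_t=n$, and $m_{k+1}=2m_k+e_{t-k-1}$. Start from $\bigl(P_{m_0}(\alpha),P_{m_0+1}(\alpha)\bigr)=(\alpha,\;\alpha^2-2)$; at step $k$, given the current pair $(u,v)=\bigl(P_{m_k}(\alpha),P_{m_k+1}(\alpha)\bigr)$ and the bit $e=e_{t-k-1}$, replace it by $(u^2-2,\;uv-\alpha)$ if $e=0$ and by $(uv-\alpha,\;v^2-2)$ if $e=1$. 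By the identities above this new pair is precisely $\bigl(P_{m_{k+1}}(\alpha),P_{m_{k+1}+1}(\alpha)\bigr)$, so after $t=\lfloor\log_2 n\rfloor$ steps the first entry is $P_n(\alpha)$, the abscissa of $[\pm n]P$. Each step uses a bounded number of multiplications and additions in $\F_q$, for a total of $O(\log n)$ operations, and $\beta$ is never used.

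The argument is mostly routine, and I do not expect a genuine obstacle. The two points that need care are the verification of the three identities above together with their specialization to characteristic $2$, and the loop-invariant bookkeeping showing that $\bigl(P_{m_k}(\alpha),P_{m_k+1}(\alpha)\bigr)$ is maintained for both bit values; the conceptual reason everything works is simply that, via $C\isom\mathbb{G}_m$, scalar multiplication acts on abscissas by the same addition law as the cosine, for which such a binary ladder is classical.
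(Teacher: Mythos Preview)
Your proposal is correct and follows essentially the same approach as the paper: both rely on the doubling rule $\alpha\mapsto\alpha^2-2$ and the differential-addition rule (abscissa of $P\oplus P'$ equals the product of abscissas minus the abscissa of $P\ominus P'$), combined into a Montgomery ladder. You phrase these via the Chebyshev-type identities $P_{2n}=P_n^2-2$ and $P_mP_n=P_{m+n}+P_{|m-n|}$ for the polynomials of Lemma~\ref{th:T2-divpol} and spell out the ladder explicitly, whereas the paper derives the two formulas directly from the group law on $C$ and simply cites~\cite{montgomery} for the ladder; the underlying computation is identical.
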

\begin{proof}
  Observe that if $n=2$, the abscissa of $[\pm 2]P$ is $\alpha^2-2$
  (for any $p$).  Let $P'=(\alpha',\beta')$, and let $\gamma$ be the
  abscissa of $P\ominus P'$. By direct computation we find that the
  abscissa of $P\oplus P'$ is $\alpha\alpha'-\gamma$ (for any $p$);
  this formula is called a \emph{differential addition}.  Thus, $O(1)$
  operations are needed for a doubling or a differential addition. To
  compute the abscissa of $[\pm n]P$, we use the ladder algorithm
  of~\cite{montgomery}, requiring $O(\log n)$ doublings and
  differential additions.
\end{proof}

\begin{proposition}
  The abscissa of a point $P\in C/\F_q$ satisfying the
  conditions of Theorem~\ref{th:T2-irred} can be found using $O_e(\log
  q)$ operations in $\F_q$.
\end{proposition}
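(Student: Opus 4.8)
The plan is to locate $P$ by random sampling: repeatedly draw a uniform point of $C(\F_q)$, test whether the corresponding element of $T_2$ is a non $\ell$-adic residue, and return the abscissa of the first point that passes the test. Two observations keep each iteration within $O(\log q)$ operations. First, $C$ carries the rational point $O$ (the neutral element, $(2,0)$ if $p\neq2$ and $(0,1)$ if $p=2$), so $C$ is a rational curve and a uniform point of $C(\F_q)$ can be produced by an explicit parametrization costing $O(1)$ operations in $\F_q$. Second, the non-residue test can be performed using abscissas only, through the ladder of Lemma~\ref{th:montgomery}.

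For the sampling step I would use the pencil of lines through $O$. Writing the generic such line as $y=t(x-x_O)+y_O$ and intersecting with $C$ leaves, after dividing out the known root at $O$, a single further intersection point whose coordinates are rational functions of $(t,\Delta)$ of bounded degree; the denominators appearing are exactly $\Delta t^2-1$ when $p\neq2$ and $t^2+t+\Delta$ when $p=2$, which by the very choice of $\Delta$ (a non-square, resp.\ of absolute trace $1$) have no root in $\F_q$. Hence $t\in\mathbb{P}^1(\F_q)$, with $t=\infty$ corresponding to $O$ itself, parametrizes $C(\F_q)$ bijectively, and a uniform $t$ yields a uniform point of $C(\F_q)\cong T_2$, a cyclic group of order $q+1$.

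For the test, recall that in the $T_2$ case $\ell\mid q+1$ and $\ell$ is odd. In the cyclic group $T_2$ of order $q+1$, an element is a non $\ell$-adic residue if and only if it is not annihilated by $(q+1)/\ell$; in terms of the point $P$, this reads $[(q+1)/\ell]P\neq O$. I would compute the abscissa of $[\pm(q+1)/\ell]P$ with the ladder of Lemma~\ref{th:montgomery} in $O(\log q)$ operations in $\F_q$ and compare it to the abscissa of $O$. Since $O$ is its own opposite, those two abscissas agree exactly when $[(q+1)/\ell]P=O$, so the test decides the non-residue property exactly, using only the abscissa $\alpha$; on success we output $\alpha$. (It is worth noting that a non $\ell$-adic residue of $T_2$ is automatically one of $\F_q(\delta)^\ast$, because $\ell\nmid q-1$, and that $\eta\neq\eta^{-1}$ then follows from $\ell$ being odd; thus $P$ indeed satisfies the hypotheses of Theorem~\ref{th:T2-irred}.)

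Finally, one iteration costs $O(\log q)$ operations in $\F_q$, and since the $\ell$-th powers form a subgroup of index $\ell$, the fraction of admissible points is $1-1/\ell\geq2/3$; the expected number of iterations is therefore $O(1)$ and the total expected cost is $O_e(\log q)$. I do not foresee a genuine obstacle here: the only points demanding a little care are verifying that the chosen $\Delta$ makes the parametrization denominators non-vanishing (so the sampling is genuinely uniform over all $q+1$ points) and that the abscissa-only ladder faithfully detects triviality of $[(q+1)/\ell]P$ — both immediate from the explicit formulas and from $O$ being $2$-torsion.
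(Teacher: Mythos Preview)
Your argument is correct and reaches the same $O_e(\log q)$ bound, but the sampling step differs from the paper's.  The paper does not parametrize $C$: it simply draws $\alpha\in\F_q$ at random and first tests whether $\alpha$ is the abscissa of a point of $C(\F_q)$, i.e.\ whether $\alpha^2-4$ is a quadratic non-residue (for $p\neq2$) or $\Tr_{\F_q/\F_2}(1/\alpha^2)=1$ (for $p=2$); this costs $O(\log q)$, succeeds for about half of all $\alpha$, and only then does it run the ladder test $[(q+1)/\ell]P\neq O$.  You instead exploit the rational parametrization of $C$ through $O$ to land directly on $C(\F_q)$ in $O(1)$ operations, skipping the membership test entirely; your verification that the denominators $\Delta t^2-1$ (resp.\ $t^2+t+\Delta$) never vanish over $\F_q$ is exactly what makes this work.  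Both routes then apply the same abscissa-only ladder from Lemma~\ref{th:montgomery} to decide non-residuosity, and both yield $O_e(\log q)$ overall; your version trades the residuosity/trace test for a bit of algebra in setting up the parametrization, and is arguably cleaner conceptually, while the paper's version is more direct to implement.
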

\begin{proof}
  We randomly select $\alpha\in\F_q$ and test that it belongs to
  $C$. If $p\ne2$, this amounts to testing that $\alpha^2-4$ is a
  quadratic non-residue in $\F_q$, a task that can be accomplished
  with $O(\log q)$ operations. If $p=2$, by Artin-Schreier theory this
  is equivalent to $\Tr_{\F_q/\F_2}(1/\alpha^2)=1$, which can be
  tested in $O(\log q)$ operations in $\F_q$.

  Then we check that $P$ is a non $\ell$-adic residue by verifying
  that $[(q+1)/\ell]P$ is not the group identity. By
  Lemma~\ref{th:montgomery}, this computation requires $O(\log q)$
  operations.
  About half of the points of $\F_q$ are quadratic non-residues, and
  about $1-1/\ell$ of them are the abscissas of points with the
  required order, thus we expect to find the required element after
  $O_e(1)$ trials.
\end{proof}

It is natural to ask whether a similar construction could be applied
to any $\ell$. If $r$ is the order of $q$ modulo $\ell$, the natural
object to look at is $T_r$, but here we are faced with two
problems. First, multiplication by $\ell$ is now a degree
$\ell^{\euler(r)}$ map, thus its fibers have too many points; instead,
isogenies of degree $\ell$ should be considered. Second, it is an open
question whether $T_r$ can be parameterized using $\euler(r)$
coordinates; but even assuming it can be, we are still faced with the
computation of a univariate annihilating polynomial for a set embedded
in a $\euler(r)$-dimensional space, a problem not known to be feasible
in quasi-linear time. Studying this generalization is another natural
follow-up to the present work.


\subsection{Towers from elliptic curves}
\label{sec:elliptic}

Since it seems hard to deal with higher dimensional algebraic tori, it
is interesting to look at other algebraic groups. Being
one-dimensional, elliptic curves are good candidates. In this
section, we quickly review Couveignes' and Lercier's construction,
referring to~\cite{couveignes+lercier11} for details, and point out
the modifications needed in order to build towers (as opposed to
constructing irreducible polynomials).

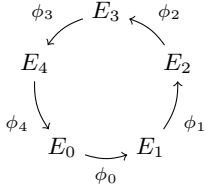
\begin{figure}
  \centering
  \begin{tikzpicture}
    \def\n{4}
    \foreach \i in {0,...,\n} {
      \pgfmathparse{360/(\n+1)*(\i-1/2) - 90}
      \let\angle\pgfmathresult
      \draw (\angle:1) node (E\i) {$E_\i$};
    }
    \foreach \i in {0,...,\n} {
      \pgfmathparse{int(mod(\i+1, \n+1))}
      \let\j\pgfmathresult
      \draw (E\i) edge[->,bend right=18] node[auto,swap] {\scriptsize$\phi_\i$} (E\j);
    }
  \end{tikzpicture}
  \caption{The isogeny cycle of $E_0$.}
  \label{fig:volcano}
\end{figure}

Let $\ell$ be a prime different from $p$ and not dividing $q-1$. Let
$E_0$ be an elliptic curve whose cardinality is a multiple of
$\ell$. By Hasse's bound, this is only possible if $\ell\le q +
2\sqrt{q} + 1$.  An \emph{isogeny} is an algebraic group morphism
between two elliptic curves that is surjective in the algebraic
closure. It is said to be rational over $\F_q$ if it is invariant
under the $q$-th power map; such an isogeny exists if and only if the
curves have the same number of points over $\F_q$. An isogeny of
degree $n$ is separable if and only if $n$ is prime to $p$, in which
case its kernel contains exactly $n$ points.  Because of the
assumptions on $\ell$, there exists an $e\ge1$ such that, for any curve
$E$ isogenous to $E_0$, the $\F_q$-rational part of $E[\ell]$ is
cyclic of order $\ell^e$.

Suppose for simplicity, that $p\ne2,3$ and let $E_0$ be expressed as
the locus
\begin{equation}
  E_0 \;:\; y^2 = x^3 + ax + b,
  \quad\text{with $a,b\in\F_q$},
\end{equation}
plus one point at infinity.  We denote by $H_0$ the unique subgroup of
$E_0/\F_q$ of order $\ell$, and by $\phi_0$ the unique isogeny whose
kernel is $H_0$; we then label $E_1$ the image curve of $\phi_0$. We
go on denoting by $H_i$ the unique subgroup of $E_i/\F_q$ of order
$\ell$, and by $\phi_i:E_i\to E_{i+1}$ the unique isogeny with kernel
$H_i$. The construction is depicted in Figure~\ref{fig:volcano}.

\begin{lemma}
  \label{th:class-number}
  Let $E_0,E_1,\dots$ be defined as above, there exists $n\in
  O(\sqrt{q}\log (q))$ such that $E_n$ is isomorphic to $E_0$.
\end{lemma}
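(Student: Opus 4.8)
The plan is to exploit the structure of the $\ell$-isogeny graph of the curves $\F_q$-isogenous to $E_0$: I will show that the sequence $E_0,E_1,\dots$ is \emph{purely} periodic, then bound its period by a class number. First I describe the action of the $q$-power Frobenius $\pi$ on the $\ell$-torsion. All the $E_i$ are $\F_q$-isogenous, so they share the same trace $t$, with $\#E_i=q+1-t\equiv 0\pmod\ell$; hence $t\equiv q+1\pmod\ell$ and the characteristic polynomial of $\pi$ on $E_i[\ell]\cong(\Z/\ell)^2$ is $X^2-tX+q\equiv(X-1)(X-q)\pmod\ell$. Since $\ell\nmid q-1$, the eigenvalues $1$ and $q$ are distinct, so $\pi$ is diagonalisable on $E_i[\ell]$: write $L_1^{(i)}$ for its $1$-eigenspace and $L_q^{(i)}$ for its $q$-eigenspace. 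Then $L_1^{(i)}=E_i(\F_q)[\ell]=H_i$ is exactly $\ker\phi_i$, and $L_1^{(i)},L_q^{(i)}$ are the only two Frobenius-stable subgroups of order $\ell$; equivalently, $E_i$ carries exactly two $\F_q$-rational $\ell$-isogenies.

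Next I would show that the map $E\mapsto E/E(\F_q)[\ell]$ is a permutation of the finite set $S$ of $\F_q$-isomorphism classes of elliptic curves of trace $t$, which forces pure periodicity. The point is to identify the kernel of the dual isogeny $\hat\phi_i\colon E_{i+1}\to E_i$: it is $\phi_i(E_i[\ell])=\phi_i(L_q^{(i)})$ (since $\ker\phi_i=L_1^{(i)}$), and because $\phi_i$ is defined over $\F_q$, Frobenius acts on this image by the scalar $q$, so $\ker\hat\phi_i=L_q^{(i+1)}$. Hence $E_i\cong E_{i+1}/L_q^{(i+1)}$, and $L_q^{(i+1)}$ is intrinsic to $E_{i+1}$ (the unique Frobenius-stable subgroup of order $\ell$ other than $E_{i+1}(\F_q)[\ell]$); so $E_i$ is recovered from $E_{i+1}$ alone, the map $E\mapsto E/E(\F_q)[\ell]$ is injective on the finite set $S$, hence a permutation of it, and therefore $(E_i)$ is purely periodic: there is an $n\ge1$, at most $\#S$, with $E_n\cong E_0$. (Equivalently, $\ell\mid\#E_0$ and $\ell\nmid q-1$ give $t^2-4q\equiv(q-1)^2\not\equiv0\pmod\ell$, so $t^2<4q$ and $\mathrm{End}_{\F_q}(E_0)$ is an imaginary quadratic order $\mathcal{O}$ in which $\ell$ splits as $\mathfrak{l}\bar{\mathfrak{l}}$ with $\ell\nmid[\mathcal{O}_K:\mathcal{O}]$; the construction is precisely iteration of the action of the fixed class $[\mathfrak{l}]$ on the $\mathrm{Cl}(\mathcal{O})$-torsor of curves with endomorphism ring $\mathcal{O}$, i.e.\ traversal of the ``crater'' of the $\ell$-volcano, see~\cite{couveignes+lercier11}; then $E_n\cong E_0$ iff $[\mathfrak{l}]^n=1$, and the least such $n$ divides $h(\mathcal{O})$.)

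It remains to bound the period. In either description it is at most the number of curves in the $\F_q$-isogeny class, which by Deuring's theorem equals, up to a bounded factor, the Kronecker class number $H(4q-t^2)$ (equivalently, it divides $h(\mathcal{O})\le h(\mathcal{O}_K)\,[\mathcal{O}_K:\Z[\pi]]$). The classical estimate $h(\mathcal{O}_K)=O(\sqrt{|\Delta_K|}\,\log|\Delta_K|)$, together with $[\mathcal{O}_K:\Z[\pi]]\sqrt{|\Delta_K|}=\sqrt{4q-t^2}<2\sqrt q$ and $4q-t^2\le 4q$, yields the bound $O(\sqrt q\log q)$.

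The step I expect to be the main obstacle is the second one: one really needs that $E_0$ \emph{recurs} (pure periodicity), not merely that the sequence is eventually periodic, and this hinges on the Frobenius-eigenspace and dual-isogeny bookkeeping above — or, equivalently, on invoking the complex-multiplication description of the crater of the $\ell$-isogeny volcano. Once that is in place, the class-number bound is routine.
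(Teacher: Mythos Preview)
Your argument is correct and follows essentially the same route as the paper: show that the sequence $(E_i)$ is purely periodic (the paper does this by citing~\cite{couveignes+lercier11} for the fact that the $\phi_i$ are \emph{horizontal} in Kohel's sense, whence they form a cycle; you give a more self-contained proof via the Frobenius-eigenspace and dual-isogeny bookkeeping, which amounts to the same volcano picture), then bound the cycle length by the class number of $\Q[X]/(X^2-tX+q)$ and invoke the standard $O(\sqrt q\log q)$ estimate. Your version is more detailed than the paper's two-sentence citation-based proof, but the strategy is the same.
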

\begin{proof}
  It is shown in \cite[\S~4]{couveignes+lercier11} that the
  isogenies $\phi_i$ are \emph{horizontal} in the sense
  of~\cite{kohel}, hence they necessarily form a cycle. Let $t$ be the
  trace of $E_0$, the length of the cycle is bounded by the class
  number of $\Q[X]/(X^2-tX-q)$, thus by Minkowski's bound it is in
  $O(\sqrt{q}\log (q))$.
\end{proof}

In what follows, the index $i$ is to be understood modulo the length of
the cycle. This is a slight abuse, because $E_n$ is isomorphic but not
equal to $E_0$, but it does not hide any theoretical or computational
difficulty.

Under the former assumptions, it is proved
in~\cite[\S~4]{couveignes+lercier11} that if $P$ is a point of
$E_i$ of order divisible by $\ell^e$, if
$\psi=\phi_{i-1}\circ\phi_{i-2}\circ\cdots\circ\phi_{j}$, then the
fiber $\psi^{-1}(P)$ is irreducible and has cardinality $\ell^{i-j}$.
Knowing $E_i$, Vélu's formulas~\cite{velu71} allow us to express the
isogenies $\phi_i$ as rational fractions
\begin{equation}
  \begin{aligned}
    \phi_i: E_i &\to E_{i+1},\\
    (x,y) &\mapsto \left(\frac{f_i(x)}{g_i(x)}, y\left(\frac{f_i(x)}{g_i(x)}\right)'\right),
  \end{aligned}
\end{equation}
where $g_i$ is the square polynomial of degree $\ell-1$ vanishing on
the abscissas of the affine points of $H_i$, and $f_i$ is a polynomial
of degree $\ell$. 

There is a subtle difference between our setting and Couveignes' and
Lercier's. The goal of~\cite{couveignes+lercier11} is to compute an
extension of degree $\ell^i$ of $\F_q$ for a fixed $i$: this can be
done by going forward $i$ times, then taking the fiber of a point of $E_i$ by
the isogenies $\phi_{i-1}, \ldots, \phi_0$. In our case, we are
interested in building extensions of degree $\ell^i$
\emph{incrementally}, i.e.\ without any \emph{a priori} bound on
$i$. Thus, we have to walk \emph{backwards} in the isogeny cycle: if
$\eta\in\F_q$ is the abscissa of a point of $E_0$ of order $\ell^e\ne
2$, we will use the following polynomials to define the $\ell$-adic
tower:
\begin{align*}
  T_1 &= f_{-1}(X_1) - \eta g_{-1}(X_1),\\ 
  T_i &= f_{-i}(X_i) - X_{i-1} g_{-i}(X_i).
\end{align*}

The following theorem gives the time for building the tower; lift and
push are detailed in the next section. 

\begin{theorem}\label{theo:elliptic}
  \sloppy
  Suppose $4\ell\le q^{1/4}$, and under the above assumption.
  Initializing the $\ell$-adic tower requires
  $O\tilde{_e}(\ell\log^5(q)+\ell^3)$ bit operations; and building the
  $i$-th level requires $O_e(\ell^2+\Mult(\ell)\log(\ell
  q)+\Mult(\ell^i)\log(\ell^i))$ operations in $\F_q$.
\end{theorem}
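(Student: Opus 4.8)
The plan is to split the cost into two independent parts: the one-time setup of the elliptic curve cycle (bit complexity), and the per-level cost of producing the polynomial $T_i$ (operations in $\F_q$). For the initialization, I would first pick a curve $E_0/\F_q$ whose cardinality is divisible by $\ell$. Using Schoof--Elkies--Atkin point counting on random curves, each trial costs $O\tilde{}(\log^5 q)$ bit operations, and the density argument (a constant fraction of curves in an isogeny class have order divisible by $\ell$, and there are enough isogeny classes) shows $O(\ell)$ trials suffice in expectation, giving the $\ell\log^5(q)$ term. Then I would compute the unique order-$\ell$ subgroup $H_0$ and the isogeny $\phi_0$ via Vélu's formulas: this requires factoring the relevant factor of the $\ell$-division polynomial and manipulating degree-$\ell$ objects, which accounts for the $\ell^3$ term (and is dominated by it). Finally I would locate $\eta$, the abscissa of a point of $E_0$ of order $\ell^e$, by sampling points and testing their order with a scalar multiplication, at negligible cost compared to the above; summing gives $O\tilde{_e}(\ell\log^5(q)+\ell^3)$ bit operations.

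For building the $i$-th level I would proceed as follows. First I need the isogeny $\phi_{-i}:E_{-i}\to E_{-i+1}$, i.e.\ I must walk one more step \emph{backwards} in the cycle. Stepping backwards from a known curve $E_{-i+1}$ amounts to identifying the correct order-$\ell$ subgroup of $E_{-i+1}$ (the one not equal to the kernel of the dual of the forward isogeny, selected so as to stay horizontal) and applying Vélu; by Lemma~\ref{th:class-number} the cycle has length $O(\sqrt q\log q)$, but crucially each individual step costs only $O(\Mult(\ell)\log(\ell q))$ operations in $\F_q$ once the cycle structure is understood, since it reduces to finding a root of a modular-equation-type polynomial of degree $O(\ell)$ and running Vélu on a degree-$\ell$ kernel. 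This gives the $\Mult(\ell)\log(\ell q)$ term; the $\ell^2$ term covers the auxiliary linear-algebra-free manipulations on the degree-$\ell$ polynomials $f_{-i},g_{-i}$ (composition with the normalization of the target curve, etc.). Once $f_{-i}$ and $g_{-i}$ are in hand, $T_i = f_{-i}(X_i) - X_{i-1}g_{-i}(X_i)$ is written down for free, but to present it (and $Q_i$) in the monomial basis $\uu_i$ one must push its coefficients down the tower and compute the resultant defining $Q_i$ over $\F_q[X_{i-1}]/\langle Q_{i-1}\rangle$; this is the step costing $O(\Mult(\ell^i)\log(\ell^i))$ operations in $\F_q$, matching the cost of a resultant of a degree-$\ell$ and a degree-$\ell^i$ polynomial with coefficients that are themselves reduced modulo $Q_{i-1}$.

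The main obstacle, and the step deserving the most care, is the backward walk: one must argue that stepping backwards is well-defined and computable at the claimed cost. Two points need attention. First, the cycle being horizontal (shown in~\cite[\S~4]{couveignes+lercier11}) guarantees that at $E_{-i+1}$ there is a \emph{canonical} choice of the preimage curve $E_{-i}$ and of the connecting isogeny, so that ``walk backwards'' is unambiguous; I would invoke the structure of the horizontal cycle and the fact that the dual isogeny $\hat\phi_{-i}$ has a distinct kernel from $\phi_{-i}$ among the order-$\ell$ subgroups. Second, I must argue that computing the next backward isogeny does \emph{not} require re-traversing the whole cycle (which would cost $O(\sqrt q\log q)$ steps): since $E_{-i+1}$ is already available from the previous level, one backward step from it suffices, at cost $O(\ell^2+\Mult(\ell)\log(\ell q))$. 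The hypothesis $4\ell\le q^{1/4}$ is what makes the various degree-$O(\ell)$ polynomial operations over $\F_q$ cheaper than the point-counting setup, and I would use it to simplify the complexity bookkeeping throughout. The remaining ingredients — Vélu's formulas, fast resultants, the push operation of Subsection~\ref{ssec:T1}, fast modular arithmetic — are all standard and contribute only the routine terms already accounted for.
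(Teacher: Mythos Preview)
Your overall decomposition (one-time setup in bit complexity, then a per-level cost in $\F_q$-operations) matches the paper, and the mechanism for the backward walk (find the neighboring $j$-invariant, distinguish it from the already-visited one, apply V\'elu) is exactly right. However, your accounting for the two ``mysterious'' terms $\ell^3$ and $\ell^2$ is off, and this is not cosmetic: it hides the one object that makes the stated bounds work.

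In the paper, the $\tilde O(\ell^3)$ bit operations in the initialization are spent computing the modular polynomial $\Phi_\ell$ over~$\Z$ (via~\cite{enge09}) and reducing it mod~$p$; they are \emph{not} spent on V\'elu or on factoring a division polynomial. Then, at each level, the $O(\ell^2)$ field operations are the cost of evaluating the bivariate $\Phi_\ell$ (which has $O(\ell^2)$ coefficients) at $j(E_{-i+1})$, and the $O_e(\Mult(\ell)\log(\ell q))$ term is the cost of factoring the resulting univariate degree-$(\ell{+}1)$ polynomial to extract $j(E_{-i})$. Your proposal uses the modular polynomial implicitly (``a modular-equation-type polynomial of degree $O(\ell)$'') but never budgets for computing it; if it is not precomputed, each backward step would incur an extra $\tilde O(\ell^3)$ bit cost, destroying the per-level bound. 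Conversely, your proposed justification of the $\ell^3$ term (factoring a factor of the $\ell$-division polynomial) does not naturally yield that figure, and your justification of the $\ell^2$ term (``auxiliary manipulations'' on $f_{-i},g_{-i}$) is too vague---those polynomials have degree $\ell$, so plain manipulations on them cost $O(\Mult(\ell))$, not $\ell^2$.

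A smaller point: the paper obtains $Q_i$ from $T_i$ and $Q_{i-1}$ using Algorithm~\ref{alg:compose} (Compose), not a generic resultant. Since $T_i$ is linear in $X_{i-1}$, the resultant $\mathrm{Res}_{X_{i-1}}(Q_{i-1},T_i)$ is precisely $g_{-i}^{\ell^{i-1}}Q_{i-1}(f_{-i}/g_{-i})$, i.e.\ $Q_{i-1}[f_{-i},g_{-i},\ell^{i-1}]$ in the paper's notation, so your ``resultant'' is the right object; but your description (``a degree-$\ell$ and a degree-$\ell^i$ polynomial with coefficients reduced modulo $Q_{i-1}$'') mislabels the degrees and the base ring, and you should point to the dedicated $O(\Mult(\ell^i)\log(\ell^i))$ algorithm rather than to a black-box resultant.
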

\begin{proof}
  For the initialization, \cite[\S~4.3]{couveignes+lercier11}
  shows that if $4\ell\le q^{1/4}$, a curve $E_0$ with the required
  number of points can be found in $O\tilde{_e}(\ell\log^5(q))$ bit
  operations. We also need to compute the $\ell$th modular polynomial
  $\Phi_\ell\bmod p$; for this, we compute it over $\Z$ with
  $\tildO(\ell^3)$ bit operations~\cite{enge09}, then reduce it
  modulo $p$.

  To build the $i$-th level, we first need to find the equation of
  $E_{-i}$. For this, we evaluate $\Phi_\ell$ at $j(E_{-i+1})$, using
  $O(\ell^2)$ operations. The resulting polynomial has two roots in
  $\F_q$, namely $j(E_{-i})$ and $j(E_{-i+2})$. We factor it using
  $O_e(\Mult(\ell)\log(\ell q))$ operations~\cite[Ch~14]{vzGG}. Once
  $E_{-i}$ is known, we find an $\ell$-torsion point using $O_e(\log
  q)$ operations, and apply Vélu's formulas to compute $\phi_{-i}$. We
  deduce the polynomial $T_i$, and $Q_i$ is obtained using
  $O(\Mult(\ell^i)\log(\ell^i))$ operations using
  Algorithm~\ref{alg:compose} given in the next section.
\end{proof}

\begin{remark}
  \label{rk:elliptic}
  Instead of computing the cycle step by step, we could compute it
  entirely during the initialization phase, by using Vélu's formulas
  alone to compute $E_1,E_2,\dots$ until we hit $E_0$ again. By doing
  so, we avoid using the modular polynomial $\Phi_\ell$ at each new
  level. By Lemma~\ref{th:class-number}, this requires
  $O_e(\ell\sqrt{q}\log(q))$ operations. This is not asymptotically good
  in $q$, but for practical values of $q$ and $\ell$ the cycle is often
  small and this approach works well. This is accounted for in the last
  row of Table~\ref{table:main}.  
\end{remark}


\section{Lifting and pushing}
\label{sec:lift-push}

The previous constructions of $\ell$-adic towers based on irreducible
fibers share a common structure that allows us to treat lifting and
pushing in a unified way. Renaming the variables $(X_{i-1},X_i)$ as
$(X,Y)$, the polynomials $(Q_{i-1},Q_i,T_i)$ as $(R,S,T)$, the
extension at level $i$ is described as
$$\F_q[Y]/ \langle S(Y) \rangle \quad\text{and}\quad \F_q[X,Y]/\langle
R(X), T(X,Y) \rangle,$$ with $R$ of degree $\ell^{i-1}$, $S$ of degree
$\ell^i$, and where $T(X,Y)$ has the form $f(Y)-X g(Y)$, with $\deg(f)
=\ell$, $\deg(g) < \ell$ and $\gcd(f,g)=1$; possibly, $g=1$. In all
this section, $f$, $g$ and their degree $\ell$ are fixed.

Lift is the conversion from the bivariate basis associated to the
right-hand side to the univariate basis associated to the left-hand side;
push is the inverse. Using the special shape of the polynomial $T$,
they reduce to composition and decomposition of rational functions, as
we show next. These results fill in all missing entries in the lift / push 
column of Table~\ref{table:main}.


\subsection{Lifting}

Let $P$ be in $\F_q[X,Y]$ and $n$ be in $\N$, with $\deg(P,X)< n$. We
define $P[f,g,n]$ as
$$P[f,g,n] = g^{n-1} P\left (\frac fg, Y\right) \in \F_q[X,Y].$$ If
$P=\sum_{i=0}^{n-1} p_i(Y) X^i$, then $P[f,g,n] = \sum_{i=0}^{n-1}
p_if^ig^{n-1-i}$.  We first give an algorithm to compute this
expression, then show how to relate it to lifting; when $g=1$,
Algorithm~\ref{alg:compose} reduces to a well known algorithm for
polynomial composition~\cite[Ex.~9.20]{vzGG}.

\begin{algorithm}[t]
  \caption{Compose}
  \label{alg:compose}
  \begin{algorithmic}[1]
    \REQUIRE $P\in \F_q[X,Y]$, $f,g\in \F_q[Y]$, $n\in\N$
    \IF {$n = 1$} 
    \RETURN $P$
    \ELSE
    \STATE $m \la \lceil n/2\rceil$
    \STATE Let $P_0,P_1$ be such that $P = P_0 + X^mP_1$
    \STATE $Q_0 \la$ Compose($P_0, f, g, m$)
    \STATE $Q_1 \la$ Compose($P_1, f, g, n-m$)
    \STATE $Q \la Q_0g^{n-m} + Q_1f^m$  \label{alg:compose:res}
    \RETURN $Q$
    \ENDIF
  \end{algorithmic}
\end{algorithm}

\begin{theorem}
  \label{th:compose}
  On input $P,f,g,n$, with $\deg(P,X)<n$ and $\deg(P,Y) < \ell$,
  Algorithm~\ref{alg:compose} computes $Q=P[f,g,n]$ using $O(\Mult(\ell
  n)\log(n))$ operations in $\F_q$.
\end{theorem}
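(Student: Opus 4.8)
The plan is to analyze Algorithm~\ref{alg:compose} directly by induction on $n$, tracking both the correctness of the output and the cost, with the key point being a careful bound on the degrees of the intermediate polynomials $Q_0, Q_1$ and the powers $f^m, g^{n-m}$.

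First I would verify correctness: by induction, the recursive calls return $Q_0 = P_0[f,g,m]$ and $Q_1 = P_1[f,g,n-m]$, where $P = P_0 + X^m P_1$ with $\deg(P_0,X) < m$ and $\deg(P_1,X) < n-m$. Unwinding the definition, $P[f,g,n] = g^{n-1} P(f/g, Y)$, and splitting the sum at $i = m$ gives $P[f,g,n] = g^{n-1}\sum_{i<m} p_i (f/g)^i + g^{n-1}\sum_{i\ge m} p_i (f/g)^i = g^{n-m}\bigl(g^{m-1}\sum_{i<m}p_i(f/g)^i\bigr) + f^m\bigl(g^{n-m-1}\sum_{j<n-m} p_{m+j}(f/g)^j\bigr) = g^{n-m} P_0[f,g,m] + f^m P_1[f,g,n-m]$, which is exactly line~\ref{alg:compose:res}. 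So the algorithm is correct.

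For the complexity, I would set up a recurrence $\mathcal{C}(n)$ for the cost on an input with $\deg(P,X) < n$ and $\deg(P,Y) < \ell$. The output $P[f,g,n]$ has degree less than $\ell n$ in $Y$ (since each term $p_i f^i g^{n-1-i}$ has $Y$-degree at most $(\ell-1) + i\ell + (n-1-i)(\ell-1) < \ell n$), and degree $0$ in $X$. The dominant non-recursive cost at each node is line~\ref{alg:compose:res}: forming the products $Q_0 g^{n-m}$ and $Q_1 f^m$. Here I need the powers $f^m$ and $g^{n-m}$; these can be precomputed once in a preprocessing step (computing all powers $f^{2^k}, g^{2^k}$ along the recursion, or simply observing that the powers needed at depth $d$ are $f$ and $g$ raised to powers of size $O(n/2^d)$ and can be built bottom-up), contributing a geometric sum dominated by $O(\Mult(\ell n))$. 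The multiplications $Q_0 g^{n-m}$ and $Q_1 f^m$ involve polynomials of $Y$-degree $O(\ell n)$, costing $O(\Mult(\ell n))$. Thus $\mathcal{C}(n) \le \mathcal{C}(\lceil n/2\rceil) + \mathcal{C}(\lfloor n/2\rfloor) + O(\Mult(\ell n))$, which by the standard master-theorem-style argument (using superlinearity of $\Mult$) solves to $\mathcal{C}(n) \in O(\Mult(\ell n)\log n)$.

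The main obstacle I anticipate is the bookkeeping for the powers of $f$ and $g$: one must be careful that recomputing $f^m$ and $g^{n-m}$ naively at every node does not blow up the cost, so I would either fold their computation into the recursion (passing down the already-computed powers) or argue that the total work for all powers across the whole tree is a geometric series summing to $O(\Mult(\ell n))$. Once that is handled, the recurrence and its resolution are routine; I would also note the remark that when $g = 1$ the products $Q_0 g^{n-m}$ disappear and line~\ref{alg:compose:res} becomes $Q_0 + Q_1 f^m$, recovering the classical composition algorithm of~\cite[Ex.~9.20]{vzGG}.
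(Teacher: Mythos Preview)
Your proposal is correct and follows essentially the same route as the paper: induction on $n$ for correctness via the split $P = P_0 + X^m P_1$, and the recurrence $\mathcal{C}(n) \le \mathcal{C}(\lceil n/2\rceil) + \mathcal{C}(\lfloor n/2\rfloor) + O(\Mult(\ell n))$ for the cost. In fact you are more careful than the paper on one point: the paper simply asserts that line~\ref{alg:compose:res} costs $O(\Mult(\ell n))$ without discussing the powers $f^m$ and $g^{n-m}$, whereas you correctly observe that these must be handled (e.g.\ built along the recursion or amortized across the tree) to stay within the stated bound.
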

\begin{proof}
  If $n=1$, the theorem is obvious. Suppose $n>1$, then $P_0$ and
  $P_1$ have degrees less than $m$ and $n-m$ respectively. By
  induction hypothesis,
  \begin{equation*}
    \begin{aligned}
      Q_0 &= P_0[f,g,m] = \sum_{i=0}^{m-1}p_if^ig^{m-1-i},\\
      Q_1 &= P_1[f,g,n-m] = \sum_{i=0}^{n-m-1}p_{i+m}f^ig^{n-m-1-i}.   
    \end{aligned}
  \end{equation*}
  Hence,
  \begin{equation*}
    Q = \sum_{i=0}^{m-1}p_if^ig^{n-1-i} +
    \sum_{i=0}^{n-m-1}p_{i+m}f^{i+m}g^{n-m-1-i} =
    P[f,g,n].
  \end{equation*}
  The only step that requires a computation is
  Step~\ref{alg:compose:res}, costing $O(\Mult(\ell n))$ operations in
  $\F_q$. The recursion has depth $\log(n)$, hence the overall
  complexity is $O(\Mult(\ell n)\log(n))$.
\end{proof}

\begin{corollary}
  At level $i$, one can perform the lift operation using
  $O(\Mult(\ell^i)\log(\ell^i))$ operations in $\F_q$.
\end{corollary}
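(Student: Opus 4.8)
The plan is to reduce the lift operation at level $i$ to a single application of the Compose algorithm (Algorithm~\ref{alg:compose}), so that Theorem~\ref{th:compose} gives the claimed $O(\Mult(\ell^i)\log(\ell^i))$ bound. First I would recall the setup from the start of this section: at level $i$, with the renaming $(X_{i-1},X_i)=(X,Y)$ and $(Q_{i-1},Q_i,T_i)=(R,S,T)$, the extension is $\F_q[X,Y]/\langle R(X),T(X,Y)\rangle$ on the bivariate side, with $R$ of degree $\ell^{i-1}$, $T(X,Y)=f(Y)-Xg(Y)$, $\deg f=\ell$, $\deg g<\ell$, $\gcd(f,g)=1$; and $\F_q[Y]/\langle S(Y)\rangle$ on the univariate side, with $\deg S=\ell^i$. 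The key algebraic observation is that modulo $\langle R(X),T(X,Y)\rangle$ we have $X\equiv f(Y)/g(Y)$, so substituting this into a bivariate polynomial $P$ representing an element gives its univariate image; this is exactly what the $P[f,g,n]$ operation computes, up to the $g^{n-1}$ normalization factor which must be cleared.

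The steps I would carry out, in order: (i) Start from an element written on the bivariate basis $\bb_i$, i.e. $P=\sum_{j<\ell^{i-1}} p_j(Y)X^j$ with $\deg(P,Y)<\ell$ and $\deg(P,X)<\ell^{i-1}=n$. (ii) Invoke Algorithm~\ref{alg:compose} on $(P,f,g,n)$ with $n=\ell^{i-1}$; by Theorem~\ref{th:compose} this produces $Q=P[f,g,n]=g^{n-1}P(f/g,Y)=\sum_j p_j f^j g^{n-1-j}$, a polynomial in $\F_q[Y]$ of degree less than $\ell n=\ell^i$, using $O(\Mult(\ell n)\log n)=O(\Mult(\ell^i)\log(\ell^i))$ operations. (iii) Reduce $Q$ modulo $S(Y)$ to land in $\F_q[Y]/\langle S\rangle$; this costs $O(\Mult(\ell^i))$, which is absorbed. (iv) Correct for the spurious factor $g^{n-1}$: since $\gcd(f,g)=1$ and $S$ is irreducible with $T(X,Y)\mid$ the relevant resultant, $g(Y)$ is invertible modulo $S(Y)$, so multiply by $g(Y)^{-(n-1)} \bmod S$; computing this inverse and the power takes $O(\Mult(\ell^i)\log(\ell^i))$ (inversion) plus $O(\Mult(\ell^i)\log n)$ (repeated squaring for the exponent), again within budget. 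The total is $O(\Mult(\ell^i)\log(\ell^i))$ as claimed.

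The step I expect to be the main obstacle — or at least the one needing the most care — is justifying (iv), namely that evaluating at $X=f/g$ and clearing denominators really does implement the change of basis from $\bb_i$ to $\uu_i$ correctly. One must check that $g(Y)$ is genuinely a unit modulo $S(Y)$ (equivalently, that no root of $S$ is a root of $g$): this follows because the abscissas parametrized by $S$ correspond to fiber points where the isogeny/map is defined, and $g$ vanishes only at the excluded points; alternatively it follows formally from $\gcd(f,g)=1$ together with the fact that $S(Y)=\mathrm{Res}_X(R(X),f(Y)-Xg(Y))/(\text{leading factor})$ up to the normalization, so a common root of $S$ and $g$ would force $f$ and $g$ to share that root. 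A secondary point is bounding $\deg(Q,Y)<\ell^i$ before reduction so that the reduction step is genuinely cheap and the output lies in the right space; this is immediate from the degree accounting $\deg(p_j f^j g^{n-1-j})\le (\ell-1)+j\ell+(n-1-j)(\ell-1)=\ell n-1$. Since all of these are routine once stated, the corollary follows by assembling the pieces, and I would present it as a short argument citing Theorem~\ref{th:compose} for the dominant cost.
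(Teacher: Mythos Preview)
Your proposal is correct and follows essentially the same route as the paper: compute $P[f,g,n]$ via Algorithm~\ref{alg:compose} at cost $O(\Mult(\ell^i)\log(\ell^i))$, then divide by $g^{n-1}$ modulo $S$ (the paper computes $\gamma=g^{n-1}$ and inverts it, whereas you invert $g$ and then power, but both fit the same bound). Your extra care in justifying that $g$ is a unit modulo $S$ is a welcome addition that the paper leaves implicit.
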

\begin{proof}
  We start from an element $\alpha$ written on the bivariate basis, that
  is, represented as $A(X,Y)$ with $\deg(A,X)<n=\ell^{i-1}$ and
  $\deg(A,Y)<\ell$ (note that $\ell n =\ell^i$).  We compute the
  univariate polynomials $A^\star=A[f,g,n]$ and $\gamma=g^{n-1}$ using
  $O(\Mult(\ell^i)\log(\ell^i))$ operations in $\F_q$; then the lift
  of $\alpha$ is $A^\star/\gamma$ modulo $S$. The inverse of $\gamma$
  is computed using $O(\Mult(\ell n)\log(\ell n))$ operations, and the
  multiplication adds an extra $O(\Mult(\ell n))$.
\end{proof}


\subsection{Pushing}

We first deal with the inverse of the question dealt with in
Theorem~\ref{th:compose}: starting from $Q \in \F_q[Y]$, reconstruct
$P \in \F_q[X,Y]$ such that $Q=P[f,g,n]$. When $g=1$,
Algorithm~\ref{alg:decompose} reduces to Algorithm~9.14
of~\cite{vzGG}.

\begin{algorithm}[t]
  \caption{Decompose}
  \label{alg:decompose}
  \begin{algorithmic}[1]
    \REQUIRE $Q,f,g,h\in \F_q[Y]$, $n \in \N$ 
    \IF {$n=1$} 
    \RETURN $Q$
    \ELSE 
    \STATE $m \la \lceil n/2 \rceil$ 
    \STATE $u \la 1/ g^{n-m}\bmod f^m$ \label{alg:decompose:xgcd} 
    \STATE $Q_0 \la Q u \bmod f^m$ 
    \STATE $Q_1 \la (Q-Q_0 g^{n-m}) {\rm~div~} f^m$ 
    \STATE $P_0 \la$ Decompose($Q_0, f, g, h, m$) 
    \STATE $P_1 \la$ Decompose($Q_1, f, g, h, n-m$) 
    \RETURN $P_0 + X^m P_1$ 
    \ENDIF
  \end{algorithmic}
\end{algorithm}

\begin{theorem}
  On input $Q,f,g,h,n$, with $\deg(Q) < \ell n$ and $h = 1/g \bmod f$,
  Algorithm~\ref{alg:decompose} computes a polynomial $P\in \F_q[X,Y]$
  such that $\deg(P,X)<n$, $\deg(P,Y) <\ell$ and $Q=P[f,g,n]$ using
  $O(\Mult(\ell n)\log(n))$ operations in $\F_q$.
\end{theorem}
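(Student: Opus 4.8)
First I would verify correctness of \texttt{Decompose} by induction on $n$. The base case $n=1$ is immediate: $P[f,g,1]=Q$. For the inductive step, write the target $P = P_0 + X^m P_1$ with $\deg(P_0,X)<m$ and $\deg(P_1,X)<n-m$; by definition of $P[f,g,n]$ we have
\begin{equation*}
  Q = P_0[f,g,m]\cdot g^{n-m} + P_1[f,g,n-m]\cdot f^m.
\end{equation*}
The key observation is that $P_0[f,g,m]$ has degree less than $\ell m$, and we can recover it modulo $f^m$: since $\deg(f^m)=\ell m$, we have $P_0[f,g,m] = (Q/g^{n-m}) \bmod f^m = Q_0$, provided $g$ is invertible modulo $f^m$ — which holds because $\gcd(f,g)=1$. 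Then $Q - Q_0 g^{n-m}$ is divisible by $f^m$, and the quotient is exactly $P_1[f,g,n-m]=Q_1$. Recursing on $Q_0$ (with parameter $m$) and $Q_1$ (with parameter $n-m$) returns $P_0$ and $P_1$ by the induction hypothesis, so $P_0 + X^m P_1$ is the desired $P$; the degree bounds $\deg(P,X)<n$ and $\deg(P,Y)<\ell$ are preserved through the recursion.

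Second, I would handle the efficiency claim. The recursion tree has depth $O(\log n)$, and at each node the work consists of: the modular inverse $u = 1/g^{n-m} \bmod f^m$ at Step~\ref{alg:decompose:xgcd}, the modular multiplication $Q_0 = Qu \bmod f^m$, the computation of $g^{n-m}$, and the exact division by $f^m$. Each of these is an operation on polynomials of degree $O(\ell n)$, costing $O(\Mult(\ell n)\log(\ell n))$ for the inverse and $O(\Mult(\ell n))$ for the remaining steps — summing to $O(\Mult(\ell n)\log(\ell n))$ per node if done naively, which would give an extra logarithmic factor over the claimed bound. The point of passing $h = 1/g \bmod f$ as an auxiliary input is precisely to avoid a full extended-gcd at every node: from $h$ one computes $1/g \bmod f^m$ by Newton iteration in $O(\Mult(\ell m))$ operations, and then $1/g^{n-m}\bmod f^m$ by a modular exponentiation, or more cheaply by noting that at sibling nodes the moduli are nested, so the inverses can be obtained by lifting. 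With this care, Step~\ref{alg:decompose:xgcd} costs $O(\Mult(\ell n))$ per node rather than $O(\Mult(\ell n)\log(\ell n))$, and since $\sum_{\text{level } k} (\ell n_k)$ telescopes to $O(\ell n)$ at each of the $O(\log n)$ levels, the total is $O(\Mult(\ell n)\log n)$.

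**The main obstacle I anticipate is the cost bookkeeping for the modular inverses**, not the correctness argument. One must be precise that the superlinear factor in the inverse does not propagate: a direct call to fast extended Euclid at each recursion node would cost $O(\Mult(\ell n_k)\log(\ell n_k))$, and summing the $\log(\ell n_k)$ factors over the tree gives $O(\Mult(\ell n)\log^2 n)$ — one log too many. The clean fix is to precompute $h=1/g\bmod f$ once (this is where the extra input to the algorithm comes in) and then obtain all the powers-of-$g$ inverses modulo the various $f^m$ by Hensel lifting from $h$, each step being only a constant number of multiplications modulo a polynomial of the relevant degree; this keeps the per-node cost at $O(\Mult(\ell n_k))$ and yields the stated bound. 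I would also check the edge case $g=1$, where Step~\ref{alg:decompose:xgcd} is trivial and the algorithm collapses to the classical polynomial decomposition of~\cite[Algorithm~9.14]{vzGG}, confirming consistency.
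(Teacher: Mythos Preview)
Your correctness argument is essentially the paper's, with one cosmetic difference: you posit a target $P$ with $P[f,g,n]=Q$ and show the algorithm recovers it, whereas the paper verifies directly that the $Q_0,Q_1$ produced by the algorithm satisfy the degree hypotheses needed to recurse ($\deg Q_0<\ell m$ by construction, and $\deg Q_1<\ell(n-m)$ because $\deg(Q_0g^{n-m})<\ell n$), and only then checks $P[f,g,n]=Q$. The paper's order is slightly cleaner since it establishes existence of $P$ rather than assuming it.

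On the cost, you correctly isolate Step~\ref{alg:decompose:xgcd} as the delicate point and correctly invoke Newton lifting from $h$, but your concrete recipe---lift $h$ to $1/g\bmod f^m$ and \emph{then} raise to the $(n{-}m)$th power modulo $f^m$---costs $O(\Mult(\ell m)\log n)$ per node, which still produces one logarithm too many when summed over the recursion tree. The paper reverses the two operations: first compute $h^{n-m}\bmod f$ by binary powering, which is only $O(\Mult(\ell)\log n)$ because the modulus has degree~$\ell$; then Newton-lift \emph{that} seed to $1/g^{n-m}\bmod f^m$ in $O(\Mult(\ell m))$. The small $O(\Mult(\ell)\log n)$ term sums to $O(n\,\Mult(\ell)\log n)\le O(\Mult(\ell n)\log n)$ over all nodes, and the claimed bound follows.
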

\begin{proof}
  We prove the theorem by induction. If $n=1$, the statement is
  obvious, so let $n> 1$. The polynomials $Q_0$ and $Q_1$ verify $Q =
  Q_0g^{n-m} + Q_1f^m.$ By construction, $Q_0$ has degree less than $\ell
  m$. Since $\deg(g) < \ell$, this implies that $Q_0 g^{n-m}$ has
  degree less than $\ell n$; thus, $Q_1$ has degree less than $\ell
  (n-m)$. By induction, $P_0$ and $P_1$ have degree less than $m$,
  resp. $n-m$, in $X$, and less than  $\ell$ in $Y$, and
  \begin{equation*}
    \begin{aligned}
      Q_0 &= P_0[f,g,m] = \sum_{i=0}^{m-1} p_{0,i}f^ig^{m-1-i},\\
      Q_1 &= P_1[f,g,n-m] = \sum_{i=0}^{n-m-1} p_{1,i}f^ig^{n-m-1-i}.
    \end{aligned}
  \end{equation*}
  Hence, $P=P_0+X^mP_1$ has degree less than $n$ in $X$ and less than $\ell$ 
  in $Y$, and the following proves correctness:
  \begin{multline*}
    P[f,g,n] = \sum_{i=0}^{m-1}p_{0,i}f^ig^{n-1-i} + 
    \sum_{i=m}^{n-1}p_{1,i-m}f^ig^{n-1-i} \\
    =P_0[f,g,m]g^{n-m} + P_1[f,g,n-m]f^m = Q.
  \end{multline*}
  At Step~\ref{alg:decompose:xgcd}, we do as follows: starting from
  $h=1/g \bmod f$, we deduce $1/g^{n-m} \bmod f$ in time
  $O(\Mult(\ell)\log(n))$ by binary powering mod $f$. We also compute
  $g^{n-m}$ in time $O(\Mult(\ell n))$ by binary powering, and we use
  Newton iteration (starting from $1/g^{n-m} \bmod f$) to deduce
  $1/g^{n-m} \bmod f^m$ in time $O(\Mult(\ell n))$. All other steps
  cost $O(\Mult(\ell n))$; the recursion has depth $\log(n)$,
  so the total cost is $O(\Mult(\ell n)\log(n))$.
\end{proof}

\begin{corollary}
  At level $i$, one can perform the push operation using
  $O(\Mult(\ell^i)\log(\ell^i))$ operations in $\F_q$.
\end{corollary}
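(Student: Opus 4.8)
The plan is to mirror exactly the structure used for the lift corollary, reading the previous theorem on \texttt{Decompose} backwards. I would start from an element $\alpha$ of the level-$i$ field given on the univariate basis, i.e.\ as a polynomial $A^\star(Y)$ with $\deg(A^\star) < \ell^i$ reduced modulo $S$. Setting $n = \ell^{i-1}$, so that $\ell n = \ell^i$, the goal is to recover the bivariate representation $A(X,Y)$ with $\deg(A,X) < n$ and $\deg(A,Y) < \ell$. The key observation is that, by the definition of lift, the univariate representation $A^\star$ of $\alpha$ is related to the bivariate one by $A^\star \equiv A[f,g,n]/g^{n-1} \pmod S$, so pushing amounts to first multiplying $A^\star$ by $g^{n-1}$ modulo $S$ to get $Q = A[f,g,n]$, and then applying \texttt{Decompose} to $Q$ to recover $A$.

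The steps, in order, would be: (1) compute $\gamma = g^{n-1} \bmod S$ by binary powering, in $O(\Mult(\ell^i)\log(\ell^i))$ operations; (2) form $Q = A^\star \gamma \bmod S$, in $O(\Mult(\ell^i))$ operations; (3) precompute $h = 1/g \bmod f$, which costs $O(\Mult(\ell)\log(\ell))$ operations since $\deg f = \ell$; (4) call \texttt{Decompose}$(Q,f,g,h,n)$, which by the previous theorem runs in $O(\Mult(\ell n)\log(n)) = O(\Mult(\ell^i)\log(\ell^i))$ operations and returns the desired $A(X,Y)$ with the correct degree bounds. Summing these, the dominant term is $O(\Mult(\ell^i)\log(\ell^i))$, as claimed.

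The one point that needs a little care — and which I expect to be the main (minor) obstacle — is verifying that step (2) genuinely produces a polynomial $Q$ of degree less than $\ell n$ that equals $A[f,g,n]$ on the nose, not merely modulo $S$, so that \texttt{Decompose} can be invoked with its stated precondition $\deg(Q) < \ell n$. This holds because $A[f,g,n] = \sum_{j=0}^{n-1} p_j(Y) f^j g^{n-1-j}$ has $X$-free degree in $Y$ bounded by $(\ell-1) + \ell(n-1) < \ell n$, so it is already reduced modulo $S$ (both $S$ and the minimal polynomial of the lift of $\alpha$ have degree $\ell^i = \ell n$, and $A[f,g,n]/g^{n-1}$ is exactly the canonical lift representative); hence reducing the product $A^\star \gamma$ modulo $S$ recovers $A[f,g,n]$ exactly. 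Once that is in place, the degree bounds on the output of \texttt{Decompose} translate directly into the validity of the bivariate representation, and the complexity claim follows.
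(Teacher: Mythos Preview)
Your proposal is correct and follows essentially the same approach as the paper's proof: multiply the univariate representative by $g^{n-1}$ modulo $S$, precompute $h=1/g\bmod f$, then apply \texttt{Decompose}. The only cosmetic differences are a naming swap (the paper calls the input $A$ and the product $A^\star$, you do the reverse) and that the paper obtains the slightly sharper bound $O(\Mult(\ell^i))$ for computing $g^{n-1}$ by exploiting the geometric growth of degrees during repeated squaring---this does not affect the final complexity, and your extra care about the degree bound $\deg Q<\ell n$ before invoking \texttt{Decompose} is a welcome clarification the paper omits.
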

\begin{proof}
  Given $\alpha$ represented by a univariate polynomial $A(Y)$ of
  degree less than $\ell n$, with $n =\ell^{i-1}$. We compute
  $g^{n-1}$ and $A^\star = g^{n-1} A \bmod S$ using $O(\Mult(\ell^i))$
  operations. Then, we compute $h=1/g \bmod f$ in time
  $O(\Mult(\ell)\log(\ell))$ and apply Algorithm~\ref{alg:decompose}
  to $A^\star$, $f$, $g$, $h$ and $n$. The result is a bivariate
  polynomial $B$, representing $\alpha$ on the bivariate basis. The
  dominant phase is Algorithm~\ref{alg:decompose}, costing
  $O(\Mult(\ell^i)\log(\ell^i))$ operations in $\F_q$.
\end{proof}


\section{Implementation}
\label{sec:impl}
To demonstrate the interest of our constructions, we made a very basic
implementation of the towers of Sections~\ref{ssec:fibers-T2}
and~\ref{sec:elliptic} in Sage~\cite{Sage}. It relies on Sage's
default implementation of quotient rings of $\F_p[X]$, which itself
uses NTL \cite{shoup2003ntl} for $p=2$ and FLINT~\cite{hart2010flint}
for other primes. Towers based on elliptic curves are constructed
using the algorithm described in Remark~\ref{rk:elliptic}. The source
code is available on De Feo's web page.

We compare our implementation against three ways of constructing
$\ell$-adic towers in Magma:
\begin{itemize}
\item We construct the levels from bottom to top using the default
  finite field constructor \verb+GF()+. For the parameters we were
  able to test, Magma uses tables of precomputed Conway polynomials
  and automatically computes embeddings on
  creation.\footnote{See \url{http://magma.maths.usyd.edu.au/magma/releasenotes/2/14}}
\item We construct the highest level of the tower first, then all the
  lower levels using the \verb+sub<>+ constructor.
\item We construct the levels from bottom to top using random dense
  polynomials, then we call the \verb+Embed()+ function.  We do not
  account for the time spent finding the irreducible polynomials.
\end{itemize}

\begin{figure}
  \centering
  \includegraphics[height=4.7cm]{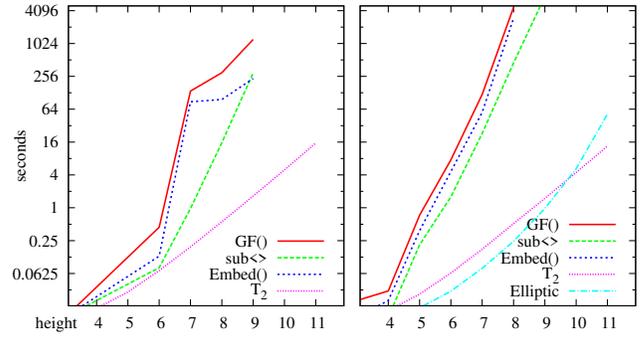}
  \caption{Times for building $3$-adic towers on top of $\F_2$ (left)
    and $\F_5$ (right), in Magma (first three lines) and using our
    code.}
  \label{fig:build}
\end{figure}

We ran tests on an Intel Xeon E5620 clocked at 2.4 GHz, using Sage 5.5
and Magma 2.18.12. The time required for the creation of $3$-adic
towers of increasing height is summarized in
Figure~\ref{fig:build}; the timings of our algorithms are labeled 
T$_2$ and Elliptic. Computations that took more than 4GB RAM were
interrupted.

Despite its simplicity, our code consistently outperforms Magma on
creation time. On the other hand, lift and push operations take
essentially no time in Magma, while in all the tests of
Figure~\ref{fig:build} we measured a running time almost perfectly
linear for one push followed by one lift, taking approximately $70\mu
s$ per coefficient (this is in the order of a second around level
10). Nevertheless, the large gain in creation time makes the
difference in lift and push tiny, and we are convinced that an
optimized C implementation of the algorithms of
Section~\ref{sec:lift-push} would match Magma's performances.


\bigskip\noindent \textbf{Aknowledgements.}  De Feo would like to thank Antoine
Joux and J\'er\^ome Pl\^ut for fruitful discussions. Schost acknowledges
support from NSERC and the CRC program.

\scriptsize
\bibliographystyle{abbrv}
\bibliography{defeo}
\end{document}